\def\thm@space@setup{\thm@preskip=3pt
\thm@postskip=3pt}
\newcommand{\heston}{{\scalebox{0.5}{Heston}}}
\newcommand{\HQH}{{\scalebox{0.5}{HQH}}}
\newcommand{\siscblue}{\color{black}}
\newcommand{\siscb}{\color{blue}}
\newcommand*\patchAmsMathEnvironmentForLineno[1]{%
  \expandafter\let\csname old#1\expandafter\endcsname\csname #1\endcsname
  \expandafter\let\csname oldend#1\expandafter\endcsname\csname end#1\endcsname
  \renewenvironment{#1}%
     {\linenomath\csname old#1\endcsname}%
     {\csname oldend#1\endcsname\endlinenomath}}%
\newcommand*\patchBothAmsMathEnvironmentsForLineno[1]{%
  \patchAmsMathEnvironmentForLineno{#1}%
  \patchAmsMathEnvironmentForLineno{#1*}}%
\numberwithin{equation}{section}
\numberwithin{table}{section}
\numberwithin{figure}{section}
\newtheorem{definition}{Definition}
\newtheorem{theorem}{Theorem}
\newtheorem{lemma}{Lemma}
\newtheorem{remark}{Remark}
\newtheorem{corollary}{Corollary}
\numberwithin{definition}{section}
\numberwithin{theorem}{section}
\numberwithin{lemma}{section}
\numberwithin{remark}{section}
\numberwithin{assumption}{section}
\numberwithin{condition}{section}
\numberwithin{property}{section}
\numberwithin{proposition}{section}
\numberwithin{corollary}{section}
\numberwithin{algorithm}{section}
\newcommand{\myblue}{\color{black}}
\newcommand{\EQ}{\begin{equation}}
\newcommand{\EN}{\end{equation}}
\newcommand{\EQS}{\begin{equation*}}
\newcommand{\ENS}{\end{equation*}}
\newcommand{\EQA}{\begin{eqnarray}}
\newcommand{\ENA}{\end{eqnarray}}
\newcommand{\EQAS}{\begin{eqnarray*}}
\newcommand{\ENAS}{\end{eqnarray*}}
\newcommand{\AL}{\begin{align}}
\newcommand{\AN}{\end{align}}
\newcommand{\ALS}{\begin{align*}}
\newcommand{\ANS}{\end{align*}}
\newcommand{\md}{d}
\newcommand{\Ibb}{\mathbb{I}}
\newcommand{\ds}{\displaystyle}
\def\r{\right}
\def\l{\left}
\def\f{\frac}
\newcommand{\Mcal}{\mathcal{M}}
\newcommand{\Pcal}{\mathcal{P}}
\newcommand{\Scal}{\mathcal{S}}
\newcommand{\Rbb}{\mathbb{R}}
\newcommand{\loc}{\scalebox{0.7}{$loc$}}
\newcommand{\norm}{\scalebox{0.5}{norm}}
\colorlet{texcscolor}{blue!50!black}
\colorlet{texemcolor}{red!70!black}
\colorlet{texpreamble}{red!70!black}
\colorlet{codebackground}{black!25!white!25}
\def\r{\right}
\def\l{\left}
\newcommand{\eps}{\epsilon}
\newcommand{\G}{G(\cdot)}
\newcommand{\subalign}[1]{%
  \vcenter{%
    \Let@ \restore@math@cr \default@tag
    \baselineskip\fontdimen10 \scriptfont\tw@
    \advance\baselineskip\fontdimen12 \scriptfont\tw@
    \lineskip\thr@@\fontdimen8 \scriptfont\thr@@
    \lineskiplimit\lineskip
    \ialign{\hfil$\m@th\scriptstyle##$&$\m@th\scriptstyle{}##$\crcr
      #1\crcr
    }%
  }
}
\begin{document}

\title{Fourier Neural Network Approximation of Transition Densities in~Finance}
\author{
Rong Du \thanks{School of Mathematics and Physics, The University of Queensland, St Lucia, Brisbane 4072, Australia,
email: \texttt{rong.du1@uq.net.au}
}
\and
Duy-Minh Dang\thanks{School of Mathematics and Physics, The University of Queensland, St Lucia, Brisbane 4072, Australia,
email: \texttt{duyminh.dang@uq.edu.au}
}
}
\date{\today}
\maketitle

\begin{abstract}
This paper introduces FourNet, a novel single-layer feed-forward neural network (FFNN) method designed to approximate transition densities for which closed-form expressions of their Fourier transforms, i.e.\ characteristic functions,  are available. A unique feature of FourNet lies in its use of a Gaussian activation function, enabling exact Fourier and inverse Fourier transformations and drawing analogies with the Gaussian mixture model.
We mathematically establish FourNet's capacity to approximate transition densities in the $L_2$-sense arbitrarily well with finite number of neurons. The parameters of FourNet are learned by minimizing a loss function derived from the known characteristic function and the Fourier transform of the FFNN, complemented by a strategic sampling approach to enhance training.
We derive practical bounds for the $L_2$ estimation error and the potential pointwise loss of nonnegativity in FourNet
{\siscb{for $d$-dimensions, $d \ge 1$,}} highlighting its robustness and applicability in practical settings. FourNet's accuracy and versatility are demonstrated through a wide range of dynamics common in quantitative finance, including L\'{e}vy processes and the Heston stochastic volatility models-including those augmented with the self-exciting Queue-Hawkes jump process.

\vspace*{.1in}

\noindent
{\bf{Keywords:}} transition densities, neural networks, Gaussian activation functions,  Fourier transforms, characteristic functions
\vspace{.1in}

\noindent\noindent {\bf{MSC codes:}} 62M45, 91-08, 60E10, 62P05
\end{abstract}
\section{Introduction}
\label{sec:intro}
The application of machine learning, especially deep learning, in quantitative finance has garnered considerable interest. Recent breakthroughs in computational resources, data availability, and algorithmic enhancements have encouraged the adoption of machine learning techniques in various quantitative finance domains. These include, but are not limited to, portfolio optimization \cite{li2019data, van2023parsimonious}, asset pricing \cite{yan2018financial, borovykh2017conditional}, model calibration and option pricing \cite{su2021option, liu2019neural, hinds2022neural}, solution of high-dimensional partial differential equations \cite{han2018solving, Pham2019, weinan2021algorithms, sirignano2018dgm}, valuation adjustments \cite{frode2022neural, gnoatto2023deep, goudenege2022computing}, as well as aspects of stochastic control and arbitrage-free analysis \cite{ito2021neural, reisinger2020rectified, cohen2021arbitrage}.

Transition (probability) density functions, which are crucial in quantitative finance due to their primary role in governing the dynamics of stochastic processes, often do not admit a closed-form expression. Consequently, the utilization of numerical methods for estimating these density functions becomes necessary. Classical methods include kernel density estimation, as referenced in \cite{milstein2004transition, rosenblatt1956remarks, giles2015multilevel}.
Yet, surprisingly, the development of neural network (NN) methods for estimating these transition probability density functions is significantly underdeveloped. While some existing NN strategies tackle the associated high-dimensional Kolmogorov partial differential equations (PDEs) using deep NNs, these are primarily black-box in nature. Such methodologies have seen applications in option pricing (\cite{su2021option, fddensity}) and general It\^{o} diffusions (\cite{gu2023stationary}).
While these methods are generally effective and versatile, they come with a major limitation: their model-dependent nature necessitates a constant reformulation of the Kolmogorov PDEs for different stochastic models. In addition, the inherent complexity associated with deploying NNs to solve PDEs might deter their practical application.
Furthermore, a notable gap in the NN literature, particularly regarding transition density function estimation,
is the limited analysis of estimation error and potential compromise of non-negativity.

In quantitative finance, many popular stochastic models have  unknown transition densities; however, their Fourier transforms, i.e.\ characteristic functions, are often explicitly available via the L\'{e}vy-Khintchine formula \cite{ken1999levy}. This property has been extensively utilized in option pricing through various numerical methods. Prominent among these are the Carr-Madan approach \cite{Carr1999}, the Convolution (CONV) technique \cite{lord2008}, Fourier Cosine (COS) method proposed by \cite{Fang2008}, Shannon-wavelet methods \cite{ortiz2016highly, dang2018dimension}, with the COS method being particularly noteworthy.
Specifically, the COS method is known for achieving high-order convergence for piecewise smooth problems.  However, within the broader framework of stochastic optimal control, where problems often exhibit complex and non-smooth characteristics, this high-order convergence is unattainable, as noted in \cite{LuDang2022, ForsythLabahn2017}.
{\siscblue{A notable drawback of the COS method is its lack of a mechanism to control the potential loss of non-negativity in estimated transition densities. This issue--more pronounced with short maturities--may lead to violations of the no-arbitrage principle in numerical value functions, posing significant challenges in stochastic optimal control where the accuracy of these values is crucial for making optimal decisions \cite{ForsythLabahn2017}.}} In the same vein of research, recent works on $\epsilon$-monotone Fourier methods for control problems in finance merit attention \cite{ForsythLabahn2017, online23, LuDang2022, LuDang2023}.

In response to the noted observations, this paper sets out to achieve three primary objectives. Firstly, we present a single-layer feed-forward (FF) NN approach to approximate transition densities with closed-form Fourier transforms, facilitating training in the Fourier domain.
This approach simplifies the implementation considerably when compared to deeper NN structures.
Second, we conduct a rigorous and comprehensive analysis of the $L_2$ estimation error between the exact and the estimated transition densities obtained through the proposed approach. This methodology, dubbed the \underline{Four}ier-trained Neural \underline{Net}work method or ``FourNet'', showcases the benefits of using the Fourier transform in FFNN models. Lastly, we validate FourNet's accuracy and versatility across a spectrum of stochastic financial models.  The main contributions of this paper are as follows.
\begin{itemize}

   \item
We establish two key results for FourNet: (i) transition densities can be approximated arbitrarily well
in the $L_2$-sense using a single-layer FFNN with a Gaussian activation function and a finite number of neurons; and (ii) the $L_2$-error in this approximation remains invariant under the Fourier transform map.
Here, $L_2\left(\mathbb{R}\right)$ denotes the space of square-integrable functions.

FourNet's methodology underscores the potential and efficacy of shallow NN architectures for complex approximation tasks. The inherent invariance under Fourier transformation opens opportunities for training and error analysis in the Fourier domain, rather than the conventional spatial domain.
This unique capability allows us to utilize the the known closed-form expression of the characteristic function and the Fourier transform of the FFNN for an in-depth analysis of the $L_2$ estimation error.

    \item
   Using FourNet, we formulate an approximation for transition densities using a single-layer FFNN equipped with a Gaussian activation function. FourNet's parameters are fine-tuned by minimizing a mean-squared-error (MSE) loss, supplemented with a mean-absolute-error (MAE) regularization. Both the loss function and regularization term stem from the known characteristic function and the Fourier transform of the FFNN. A strategic sampling approach is proposed, maximizing the benefits of MAE regularization.

We establish practical bounds for the \(L_2\) estimation error and potential loss of nonnegativity in FourNet {\siscb{for the
general case of $d$-dimensions ($d\ge 1$)}}, which are attributed to truncation, training, and sampling errors. These bounds highlight FourNet's advantages over existing Fourier-based and NN-based density estimation methods, offering valuable insights into its reliability and robustness in practical applications.

\item We showcase FourNet's accuracy and versatility across a broad spectrum of financial models, with a particular focus on option pricing. Our analysis encompasses the class of exponential L\'{e}vy processes, such as the CGMY model \cite{carr2002fine}, Merton jump-diffusion model \cite{merton1976option}, and Kou asymmetric double exponential model \cite{kou2002jump}, along with their multi-dimensional extensions.  We also explore the Heston model \cite{heston93} and its recent adaptations  that incorporate the self-exciting Queue-Hawkes jump process \cite{daw2022ephemerally, arias2022}. Notably, FourNet demonstrates exceptional robustness in handling ultra-short maturities and asymmetric heavy-tailed distributions, scenarios that pose significant challenges for traditional Fourier-based methods.

\end{itemize}
This paper introduces the FourNet method and its initial applications, setting the stage for subsequent works. Primarily focusing on European options, it also demonstrates FourNet's capabilities in pricing Bermudan options within L\'{e}vy process-based models. Future work will extend FourNet's application to complex stochastic control problems, including portfolio optimization, thereby broadening its scope and utility.
Although this work centers on  transition densities,
FourNet's methodology and its comprehensive error analysis are also relevant to the study of Green's functions
for parabolic integro-differential equations \cite{garronigreenfunctionssecond92}, due to their foundational relationship.

The remainder of the paper is organized as follows.
Section~\ref{section:FN} outlines the structure of single-layer FFNNs with non-sigmoid activation functions and introduces a related universal approximation theorem. Section~\ref{sc:fournet} presents FourNet, detailing key mathematical results and the MSE loss function. An error analysis of FourNet is detailed in Section~\ref{sc:error}.
Section~\ref{sc:Training} discusses sampling strategies, training considerations, and algorithms.  Section~\ref{sec:num} demonstrates FourNet's accuracy and versatility through extensive numerical experiments. The paper concludes in Section~\ref{sc:cc} with a discussion of potential future work.

\section{Background on single-layer FFNNs}
\label{section:FN}

\subsection{Non-sigmoid activation functions}
Feed-forward neural networks can be perceived as function approximators comprising of several inputs, hidden layers composed of neurons/nodes, an activation function, and several outputs.
This study primarily concentrates on shallow NNs characterized by a single input, a single output, and a number of nodes within the hidden layer. We also consider only the case that the input is one dimensional. Figure~\ref{fig:nn_diag} depicts a single-layer FFNN having a total of $N$ nodes in the hidden layer.

\begin{center}
\includegraphics[width=0.555\textwidth, height=0.22\textwidth]{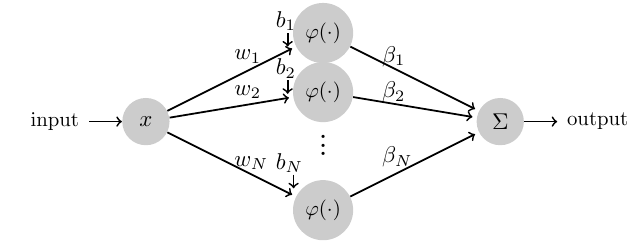}
\vspace*{-0.4cm}
\captionof{figure}{{\siscblue{Single-layer FFNN with one-dimensional input $x \in \mathbb{R}$, featuring $N$ neurons with weights $w_n$ and $\beta_n$, biases $b_n$, and activation function $\varphi(\cdot)$. }}}
\label{fig:nn_diag}
\end{center}

We now start with FFNNs with (Borel measurable) non-sigmoid activation functions, and the associated Universal Approximation Theorem \cite{Maxwell118640}[Theorem~2.1]. This class of of FFNNs is defined below.
\begin{definition}[$\Sigma^{\dagger}(\varphi)$ - activation function $\varphi$]
\label{def:Sigmaprime}
Let $\Sigma^{\dagger}(\varphi)$ be the class of single-layer FFNNs having arbitrary Borel measurable activation functions $\varphi$ defined by
\EQ
\label{eq:Sigmaprime}
\Sigma^{\dagger}(\varphi) = \bigg\{ \widehat{g} : \Rbb \to \Rbb \big{|} ~  \widehat{g}(x;\theta) = \sum_{n=1}^N  \beta_n  \varphi\l(w_n x + b_n\r),
~\beta_n, w_n, b_n \in \Rbb, ~ N \in \mathbb{N}\bigg\}.
\EN
Here, $x \in \mathbb{R}$ is the input; for a fixed $N$, the parameter $\theta \in \Rbb^{3N}$ is constituted by the weights $w_n$ and $\beta_n$,
and the bias terms $b_n$, $n = 1, \ldots, N$.
\end{definition}
For subsequent use, for $1 \le p < \infty$, we define the sets of $p$-integrable and $p$-locally-integrable  functions,
respectively denoted by $L_p\left(\mathbb{R}\right)$ and $L_p\left(\mathbb{R}, loc \r)$, as follows
\begin{linenomath}
\postdisplaypenalty=0
\begin{align}
\label{eq:Lp}
L_p\left(\mathbb{R}\right)
& = \bigg\{f \in \Mcal ~\big{|}~ \|f\|_p \equiv\bigg[\int |f(x)|^p \md x\bigg]^{1 / p}<\infty \bigg\},
\nonumber
\\
L_p\left(\mathbb{R}, loc \r)
&=
\bigg\{f \in \Mcal ~\big{|}~ f\, \Ibb_{[-A, A]} \in L_p\left(\mathbb{R}\right), \forall A \in \{1, 2, 3, \ldots\} ~ \bigg\}.
\end{align}
\end{linenomath}
Here, $\Mcal$ is the space of all Borel measurable functions $f : \Rbb \to \Rbb$.\footnote{{\siscblue{The set $\Mcal$ essentially contains all functions relevant to practical applications.}}}

Closeness of two elements $f_1$ and $f_2$ of $L_p\left(\mathbb{R}, loc\r)$ is measured by a metric $\rho_{p, \loc}(f_1, f_2)$ defined as follows \cite{Maxwell118640}
\EQ
\label{eq:rho}
\ds \rho_{p, \loc}\l(f_1, f_2\r)= \sum_{A = 1}^{\infty} \l(2^{-A}\r) \min \left(\left\|(f_1-f_2)~  \Ibb_{[-A, A]}\right\|_p, 1\right),\quad
f_1, f_2 \in L_p\left(\mathbb{R}, \loc \r).
\EN
{\siscblue{Here, the indicator function $\Ibb_{D}(\cdot)$ is defined as follows: $\Ibb_{D}(x) = 1$ if $x \in D$ and zero otherwise.}}
We now introduce the notion of $\rho_{p, \loc}$-denseness for $L_p(\mathbb{R}, \loc)$ \cite{Maxwell118640}.
\begin{definition}[$\rho_{p}$-denseness, $1\le p < \infty$]
\label{def:rho}
A subset $\Scal$ of $L_p\left(\mathbb{R}, \loc \r)$ is $\rho_{p, \loc}$-dense in $L_p\left(\mathbb{R}, \loc \r)$ if, for any $f_1$ in $L_p\left(\mathbb{R}, \loc \r)$ and any $\varepsilon>0$, there is a $f_2$ in $\Scal$ such that $\rho_{p, \loc}\l(f_1, f_2\r)<\varepsilon$,
where $\rho_{p, \loc}\l(f_1, f_2\r)$ is defined in \eqref{eq:rho}.
\end{definition}

\subsection{Universal Approximation Theorem}
The Universal Approximation Theorem proposed in \cite{hornik1989multilayer} for sigmoid activation functions
play a key theoretical foundation.  However, sigmoid activation functions are not necessary for universal approximation
as highlighted in \cite{Maxwell118640}[Theorem~2.1] - therein, an identical universal approximation theorem to the one in
\cite{hornik1989multilayer} was obtained.
The key finding of \cite{Maxwell118640} is that, for sufficiently complex single-layer FFNNs with an arbitrary (Borel measurable) activation function at the hidden layer can approximate an arbitrary target function $f(\cdot) \in L_p(\mathbb{R}, loc)$,  $1 \le p < \infty$,  arbitrary well, provided that the activation function, denoted by $\varphi(x)$, belong to $L_1(\mathbb{R}) \cap L_p(\mathbb{R})$ and
$\int_{\mathbb{R}} \varphi(x) dx$ does not vanish.  Formally, we state the Universal Approximation Theorem for non-sigmoid activation functions  below.
\begin{theorem}[Universal Approximation Theorem \cite{Maxwell118640}~Theorem~2.1]
\label{theorem:u2}
Let  $\varphi$ be the (Borel measurable) activation function that belongs to $L_1(\Rbb) \cap L_p(\Rbb)$, $1\le p< \infty$.
If $\int_{\Rbb} \varphi(x)~ dx \neq 0$, then $\Sigma^{\dagger}(\varphi)$ is $\rho_{p, \loc}$-dense in  $L_p (\Rbb, loc)$.
Here, $\Sigma^{\dagger}(\varphi)$ and $\rho_{p, \loc}$ are respectively defined in Definitions~\ref{def:Sigmaprime} and \ref{def:rho}.
\end{theorem}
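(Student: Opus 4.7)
\textbf{Proof proposal for Theorem~\ref{theorem:u2}.}
The plan is to combine an approximate-identity/convolution argument with a Riemann-sum discretisation that naturally produces an element of $\Sigma^{\dagger}(\varphi)$. Since the metric $\rho_{p,\loc}$ is a $2^{-A}$-weighted sum of truncated $L_p$-norms $\|(f_1-f_2)\Ibb_{[-A,A]}\|_p$, given any $f \in L_p(\Rbb,\loc)$ and any $\varepsilon > 0$ I would first pick $A_0$ so that the tail $\sum_{A>A_0} 2^{-A}$ is below $\varepsilon/2$. It then suffices to construct $\widehat{g}\in \Sigma^{\dagger}(\varphi)$ with $\|(f-\widehat{g})\Ibb_{[-A_0,A_0]}\|_p < \varepsilon/2$. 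Setting $\tilde f := f\,\Ibb_{[-A_0,A_0]}$, which lies in $L_p(\Rbb)$ with compact support by the definition of $L_p(\Rbb,\loc)$, becomes the target function I actually approximate.

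Next, I would exploit the hypothesis $c := \int_{\Rbb} \varphi(x)\,dx \neq 0$ by introducing the rescaled kernels $\psi_\sigma(x) = (c\sigma)^{-1}\varphi(x/\sigma)$, $\sigma>0$. A change of variables gives $\int \psi_\sigma = 1$ uniformly in $\sigma$ and $\|\psi_\sigma\|_1 = \|\varphi\|_1/|c| < \infty$ since $\varphi \in L_1(\Rbb)$. The classical approximate-identity theorem in $L_p$ then yields $\|\psi_\sigma \ast \tilde f - \tilde f\|_p \to 0$ as $\sigma \to 0^+$, so a sufficiently small $\sigma_0$ produces $\|\psi_{\sigma_0}\ast \tilde f - \tilde f\|_p < \varepsilon/4$.

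The third step is to realise $\psi_{\sigma_0}\ast\tilde f$ as a Riemann sum sitting in $\Sigma^{\dagger}(\varphi)$. Partitioning a bounded interval covering $\mathrm{supp}(\tilde f)$ by nodes $\{y_n\}_{n=1}^N$ with widths $\{\Delta_n\}$, the sum
\[
\widehat g(x) \;=\; \sum_{n=1}^{N} \beta_n\,\varphi(w_n x + b_n), \qquad w_n = 1/\sigma_0,\ \ b_n = -y_n/\sigma_0,\ \ \beta_n = \Delta_n\,\tilde f(y_n)/(c\sigma_0),
\]
is exactly a discretisation of the convolution integral $(\psi_{\sigma_0}\ast\tilde f)(x) = \int (c\sigma_0)^{-1}\varphi((x-y)/\sigma_0)\tilde f(y)\,dy$ and lies in $\Sigma^{\dagger}(\varphi)$ by construction. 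To drive the truncated error $\|\widehat g - \psi_{\sigma_0}\ast\tilde f\|_{L_p([-A_0,A_0])}$ below $\varepsilon/4$, I would first replace $\tilde f$ by a continuous compactly supported surrogate $\tilde f_\delta$ via $C_c(\Rbb)$-density in $L_p(\Rbb)$, discretise the smoother convolution $\psi_{\sigma_0}\ast\tilde f_\delta$ where uniform continuity yields a clean mesh-dependent bound, and transfer back to $\tilde f$ via Young's inequality $\|\psi_{\sigma_0}\ast(\tilde f-\tilde f_\delta)\|_p \le \|\psi_{\sigma_0}\|_1\,\|\tilde f-\tilde f_\delta\|_p$. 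A triangle-inequality assembly of the three $\varepsilon/4$ contributions together with the tail bound from the first step gives $\rho_{p,\loc}(f, \widehat g) < \varepsilon$, as required.

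The hardest step will be the Riemann-sum bound in the third stage, because neither $\tilde f$ nor $\varphi$ is assumed continuous; the integrand lives only in $L_1\cap L_p$ and can be wildly irregular. My intended route via a $C_c$-surrogate converts the problem into controlling the $L_p$-continuity of translations acting on $\varphi$ (using $\|\tau_h\varphi - \varphi\|_p \to 0$ as $h\to 0$), but the bookkeeping --- choosing the mesh size, the smoothing parameter $\delta$, and the scale $\sigma_0$ in the correct dependent order --- is where most of the technical care must be concentrated. If that route becomes unwieldy, an alternative is to approximate $\tilde f$ first by a simple function $\sum_n a_n \Ibb_{[y_n,y_{n+1})}$, observe that $\Ibb_{[y_n,y_{n+1})}\ast\psi_{\sigma_0}$ is an integral of $\varphi$ over a small interval, and invoke uniform approximation of that integral by a midpoint evaluation to land back inside $\Sigma^{\dagger}(\varphi)$.
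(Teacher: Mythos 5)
The paper does not prove Theorem~\ref{theorem:u2} at all: it is imported verbatim from the cited reference \cite{Maxwell118640} and used as a black box, so there is no in-paper proof to compare against. Your proposal is, however, a correct reconstruction of the standard argument behind such results (and essentially the route taken in the cited source): reduce the $\rho_{p,\loc}$ estimate to a truncated $L_p$ estimate via the $2^{-A}$ tail, mollify with the normalized dilates $\psi_\sigma(x)=(c\sigma)^{-1}\varphi(x/\sigma)$ --- which is where the hypothesis $c=\int\varphi\neq 0$ enters --- and then discretize the convolution into a finite sum of the form $\sum_n\beta_n\varphi(w_nx+b_n)\in\Sigma^{\dagger}(\varphi)$. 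The two points that genuinely need care are exactly the ones you flag: (i) the Riemann-sum step cannot be run directly on $\tilde f$ and a rough $\varphi$, and your fix (pass to a $C_c$ surrogate, then control the discretization error via Minkowski's integral inequality using uniform continuity of $\tilde f_\delta$ and $L_p$-continuity of translations of $\varphi$) is sound precisely because the hypothesis puts $\varphi$ in $L_p(\Rbb)$ as well as $L_1(\Rbb)$; (ii) a trivial bookkeeping slip at the end --- three $\varepsilon/4$ contributions plus an $\varepsilon/2$ tail exceed $\varepsilon$ --- which is repaired by shrinking the constants. Neither is a substantive gap.
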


\section{A Fourier-trained network (FourNet)}
\label{sc:fournet}
We denote by $T>0$ a finite horizon, and let $t$ and $\Delta t$ be fixed such that $0\le t < t + \Delta t \le T$.
For the sake of exposition, we focus on estimating a time and spatially homogeneous transition density, denoted by $g(\cdot)$ and is represented as $g(x, t + \Delta t;y, t ) = g(x- y;\Delta t)$. Such transition densities are characteristic of L\'{e}vy processes.

{\siscblue{Our methodology also applies to models like the Heston and Heston-Queue-Hawkes,
which exhibit non-homogeneity in variance. For these models, pricing European options needs only a single training session; however, control problems typically require multiple independent sessions to address time-stepping.
Each session uses a dataset derived from characteristic functions conditioned on different starting and ending variance values.
These sessions can be conducted in parallel, thereby enhancing computational efficiency.
This aligns with existing Fourier methods
(e.g.\ \cite{fang2011fourier, ForsythLabahn2017, ruijter2012two}) which similarly utilize characteristic functions conditioned on variance values at each timestep. We plan to explore Heston-type models in control problems in future work.}}

For notational simplicity, we momentarily suppress the explicit dependence of the transition density on $\Delta t$,
denoting $g(\cdot) \equiv g(\cdot; \Delta t): \Rbb \to \Rbb$ as the transition density we seek to approximate using single-layer FFNNs.
The importance of $\Delta t$ will be highlighted in our applications detailed in Section~\ref{sec:num}.

Since the transition density $g(\cdot)$ is almost everywhere bounded on $\Rbb$, together with the fact that $g \in L_1(\Rbb)$, we have $g \in L_2(\Rbb)$. Therefore, we consider approximating a transition density $g \in L_1(\Rbb) \cap L_2(\Rbb)$.
\subsection{A universal approximation result in  $\boldsymbol{L_2(\Rbb)}$}
We now provide a refined version of the Universal Approximation Theorem~\ref{theorem:u2} for target functions in $L_2(\Rbb)$. Specifically, by invoking H\"{o}lder's inequality, we have that $L_2\left(\mathbb{R}\right) \subset L_2\left(\mathbb{R}, \text{loc}\r)$.
A natural question thus emerges: if the function we aim to approximate, $f$, belongs to $L_2 (\Rbb)$ rather than
$L_2 (\Rbb, loc)$, can we identify an FFNN in $\Sigma^{\dagger}(\varphi) \cap L_2 (\Rbb)$ that approximates $f$ arbitrarily well,  in the sense of the Universal Approximation Theorem~\ref{theorem:u2}?
In the forthcoming lemma, we affirmatively address this question.

\begin{lemma}[$\rho_{2, \loc}$-denseness of $\Sigma^{\dagger}\l(\varphi\r) \cap L_2 (\Rbb)$]
\label{lemma:g_rho}
Let $\varphi$ be a continuous activation function that belongs to $L_1 (\Rbb) \cap L_2 (\Rbb)$. Assume that $f(\cdot)$ is in $L_2 (\Rbb)$.
For any $\epsilon > 0$, there exists a neural network $f'(\cdot; \theta') \in \Sigma^{\dagger}\l(\varphi\r) \cap L_2 (\Rbb)$
such that $\rho_{2, \loc}\l(f, f'\r) < \epsilon$, where $\rho_{2, \loc}(\cdot)$ is defined in Definition~\ref{eq:rho}.
\end{lemma}


\begin{proof}[Proof of Lemma~\ref{lemma:g_rho}]
Since $\varphi \in L_1(\Rbb) \cap L_2(\Rbb)$, it satisfies the conditions of Theorem~\ref{theorem:u2}
for $p = 2$. Therefore, $\exists f'(\cdot; \theta') \in \Sigma^{\dagger}\l(\varphi\r)$ such that
$\rho_{2, \loc}\l(f, f'\r) < \eps$.
Here, $f'(x; \theta')= \sum_{n=1}^{N'}  \beta_n'  \varphi(w_n' x + b_n')$,
where $N'$ is the finite number of neurons.
As $f'(\cdot; \theta') \in L_2(\mathbb{R}, loc)$, each $\beta_n'  \varphi(w_n' x + b_n')$
is square-integrable on any compact set.
Since $\varphi \in L_2(\Rbb)$, the square-integrability of $\beta_n'  \varphi(w_n' x + b_n')$ implies that $|\beta_n'| <\infty$ for all $n \le N'$. Finally, as $f'(\cdot; \theta')$ is a finite sum of functions in $L_2(\mathbb{R})$,
it follows that $f'(\cdot; \theta')\in L_2(\mathbb{R})$. This concludes the proof.
\end{proof}

\subsection{Gaussian activation function $\boldsymbol{e^{-x^2}}$}
Building upon Lemma~\ref{lemma:g_rho}, we present a corollary focusing on the Gaussian activation function $\varphi(x) \equiv
\phi(x) = e^{-x^2}$.
\begin{corollary}
\label{col:gauss}
For a target function $f(\cdot)$ in $L_2(\Rbb)$ and any $\epsilon > 0$, there exists an FFNN $f'(\cdot; \theta') \in \Sigma^{\dagger}(\phi) \cap L_2(\Rbb)$ with $\phi(x) = e^{-x^2}$, where $f'$ approximates $f$ such that $\rho_{2, \loc}(f, f') < \epsilon$. The FFNN $f'(\cdot; \theta')$ has bounded parameters:
$0 < |\beta_n'| < \infty $, $0 < |w_n'| < \infty$, and $|b_n'| < \infty$, $\forall n = 1, \ldots, N'$.
\end{corollary}
\begin{proof}
Through integration, we verify that  $\phi(x) = e^{-x^2}\in L_1(\Rbb) \cap L_2(\Rbb)$.
Thus, by Lemma~\ref{lemma:g_rho}, $\exists$ $f'(\cdot; \theta') \in \Sigma^{\dagger}(\phi) \cap L_2 (\Rbb)$
such that $\rho_{2, \loc}(f, f') < \epsilon$. Here, $f'(x; \theta')= \sum_{n=1}^{N'}  \beta_n'  \varphi(w_n' x + b_n')$.
By Lemma~\ref{lemma:g_rho}, $|\beta_n'|~<~\infty$, $\forall n \le N'$.
Additionally, it must be true that $\beta_n' \neq 0$, $\forall n\le N$;
otherwise the corresponding neuron output is zero.
For the same reason, we also have $|w_n'|, |b_n'| < \infty$, $\forall n\le N'$.
Lastly, it is also the case that  $w_n' \neq 0$ for all $n \le N'$. Otherwise, suppose that  $w_{k}' =  0$ for $n = k\le N'$,
then  $0<|\beta_{k}'| \exp(-(w_{k}' x + b_{k}')^2) = |\beta_{k}'| \exp(-(b_{k}')^2) \le c$, where  $0<c<\infty$, contradicting with $f'(\cdot; \theta') \in L_2 $. This concludes the proof.
\end{proof}
Corollary~\ref{col:gauss} establishes that for any target function in $L_2(\mathbb{R})$, an FFNN with bounded parameters exists within $\Sigma^{\dagger}(\phi) \cap L_2(\Rbb)$, where $\phi(x) = e^{-x^2}$, capable of approximating the target function arbitrarily well as measured by $\rho_{2, \loc}(\cdot, \cdot)$.
This result allows us to narrow down our focus to $\Sigma(\phi)$, a more specific subset of $\Sigma^{\dagger}(\phi)$,
that encapsulates FFNNs characterized by parameter bounds.
We now formally define $\Sigma(\phi)$ and its associated bounded parameter space $\Theta$,
both of which are crucial for our subsequent theoretical analysis and practical application:
\vspace*{-0.1cm}
\begin{align}
\label{eq:Sigma}
\Sigma(\phi) = \big\{ \widehat{g} : \Rbb \to \Rbb \big{|} ~  \widehat{g}(x;\theta) = \sum_{n=1}^{N}  \beta_n  \phi(w_n x + b_n), ~\phi(x) = e^{-x^2}, ~\theta \in \Theta\big\},&
\\
\Theta = \big\{\theta \in \Rbb^{3N}~ \big{|}~0<|\beta_n| < \infty,~0< |w_n|<\infty,~|b_n|<\infty,~
n = 1, \ldots, N\big\}.&
\label{eq:Theta}
\end{align}

\vspace*{-0.25cm}
\begin{remark}
\label{rm:l2}
All FFNNs in $\Sigma(\phi)$, where $\phi(x) = e^{-x^2}$, belong to $L_2(\Rbb)$ as per definitions \eqref{eq:Sigma} and~\eqref{eq:Theta}. By Corollary~\ref{col:gauss},
given any target function in $f(\cdot)$ in $L_2(\Rbb)$ and any $\epsilon > 0$, there exists an FFNN $f'(\cdot; \theta') \in \Sigma(\phi)$ such that $\rho_{2, \loc}(f, f') < \epsilon$.
\end{remark}

\subsection{Existence and invariance of FourNet}
We now establish a key result demonstrating the existence of $\widehat{g}(\cdot; \theta_{\eps}^*) \in \Sigma\l(\phi\r)$,
where $\Sigma\l(\phi\r)$ is defined in \eqref{eq:Sigma}, that is capable of approximating the exact transition density $g(\cdot)$
arbitrarily well in the $L_2$-sense. We hereafter refer to $\widehat{g}(\cdot; \theta_{\eps}^*)$ as a theoretical  FFNN approximation to the true transition density $g(\cdot)$. Furthermore, we also show that the associated theoretical approximation error in $L_2$ remains invariant under the Fourier transform map.

To this end, we recall that the transition density $g(\cdot)$ and the associated characteristic function  $G(\eta)$
are a Fourier transform pair.  They are defined as follows
\EQS
\label{eq:small_g}
\mathfrak{F}[g(\cdot)](\eta) \equiv   G(\eta) = 
\int_{-\infty}^{\infty} e^{i \eta x}\, g(x)\, dx,
~~
\mathfrak{F}^{-1}[G(\cdot)](x) \equiv g(x) =  
\f{1}{2\pi} \int_{-\infty}^{\infty} e^{-i \eta x}\, G(\eta)\, d\eta.
\ENS
For subsequent discussions,  for a complex-valued function $f:\Rbb \to \mathbb{C}$, we denote by $\text{Re}_f(\cdot)$ and $\text{Im}_f(\cdot)$ its real and imaginary parts. {\siscblue{We also have}} $\l|f(\cdot)\r|^2 = f(\cdot)\overline{f(\cdot)}$,
where $\overline{f(\cdot)}$ is the complex conjugate of $f(\cdot)$.

We will also utilize the Plancherel Theorem, which is sometimes also referred to as the Parseval-Plancherel identity \cite{yosida1968functional, anker2006lie, kobayashi2011representation}. For the sake of convenience, we reproduce it below.
Let  $f: \mathbb{R} \to \mathbb{R}$ be a function in $L_1(\mathbb{R}) \cap L_2(\mathbb{R})$.
The Plancherel Theorem states that its Fourier transform $\mathfrak{F}[f(\cdot)](\eta)$ is in $L_2(\mathbb{R})$, and
\EQ
\label{eq:Plancherel}
\int_{\Rbb} \l|f(x)\r| ^2 \md x = \frac{1}{2\pi}\int_{\Rbb} \l|\mathfrak{F}[f(\cdot)](\eta)\r| ^2 \md \eta.
\EN

\begin{theorem}[FourNet's existence result]
\label{theorem:g_G_eps1}
Given any $\epsilon>0$, there exists an FFNN $\widehat{g}(\cdot; \theta_{\eps}^*)\in \Sigma(\phi)$, where $\Sigma(\phi)$ is defined in \eqref{eq:Sigma}, that satisfies the following 
\EQ
\label{eq:Gbound}
  \int_{\Rbb} \l|g(x)- \widehat{g}(x; \theta_{\eps}^*)\r|^2~\md x = \frac{1}{2\pi} \int_{\Rbb} \l|G(\eta)- \widehat{G}(\eta; \theta_{\eps}^*)\r|^2~\md \eta
 ~<~ \epsilon.
\EN
Here, $\widehat{G}(\eta; \theta_{\eps}^*)$ is the Fourier transform of $\widehat{g}(\cdot; \theta_{\eps}^*)$, i.e.\ $\widehat{G}(\eta; \theta_{\eps}^*) = \mathfrak{F}\l[\widehat{g}(\cdot; \theta_{\eps}^*)\r](\eta)$.
\end{theorem}
\begin{proof}[Proof of Theorem~\ref{theorem:g_G_eps1}]
We first show $\int_{\Rbb} \l|g(x)- \widehat{g}(x; \theta_{\eps}^*)\r|^2~\md x < \epsilon$, then
the equality in \eqref{eq:Gbound}.
Since $g(\cdot)$ and $\widehat{g}(\cdot; \theta_{\eps}^*)$ are in $L_2(\Rbb)$, $\exists A'$ sufficiently large such that
\begin{linenomath}
\postdisplaypenalty=0
\begin{align}
\int_{\Rbb \setminus [-A',A']} |g(x)|^2 ~\md x~ < ~\epsilon/8,
\quad
\int_{\Rbb \setminus [-A',A']} |\widehat{g}(x; \theta_{\eps}^*)|^2 ~\md x ~< ~\epsilon/8.
\label{eq:ep1_18}
\end{align}
\end{linenomath}
{\siscblue{By Remark~\ref{rm:l2}, there exists $\widehat{g}(x; \theta_{\eps}^*) \in \Sigma\l(\phi\r)$ such that}}
\EQS
\ds \rho_{2, \loc}\l(g, \widehat{g}\l(\cdot; \theta_{\epsilon}^*\r)\r)
=
\sum_{A= 1}^{\infty} 2^{-A} \min \left(\left\|(g-\widehat{g}(\cdot; \theta_{\eps}^*))~  \Ibb_{[-A, A]}\right\|_2, 1\right)
< \frac{\epsilon^{1/2}}{2^{1/2}}2^{-A'}.
\ENS
Therefore, 
\EQS
\label{eq:rho_2_loc}
\ds 2^{-A'} \min \left(\left\|(g-\widehat{g}(\cdot; \theta_{\eps}^*))~  \Ibb_{[-A', A']}\right\|_2, 1\right)
< \frac{\epsilon^{1/2}}{2^{1/2}}2^{-A'},
\ENS
from which, we have
\EQ
\label{eq:ep1_12}
\int_{[-A',A']} \l(g(x)- \widehat{g}(x;\theta_{\eps}^*)\r)^2 ~\md x < \epsilon/2.
\EN
Using \eqref{eq:ep1_18}-\eqref{eq:ep1_12}, we have $\ds \int_{\Rbb} (g(x)-\widehat{g}(x;\theta_{\eps}^*))^2 ~\md x = \ldots$
\begin{linenomath}
\postdisplaypenalty=0
\begin{align}
\ldots &= \int_{[-A',A']} \l|g(x)-\widehat{g}(x;\theta_{\eps}^*)\r|^2 ~dx
+ \int_{\Rbb \setminus [-A',A']} \l|g(x)-\widehat{g}(x;\theta_{\eps}^*)\r|^2 ~\md x\\
        \nonumber
        & \qquad \qquad < \epsilon/2 + \int_{\Rbb \setminus [-A',A']} 2\l|g(x)^2+ \widehat{g}(x;\theta_{\eps}^*)^2\r|~ \md x
        < \epsilon,
\end{align}
\end{linenomath}
as wanted. Next, the equality in \eqref{eq:Gbound} follows directly from the Plancherel Theorem \eqref{eq:Plancherel},
noting $L_1(\mathbb{R})$ and $L_2(\mathbb{R})$ are closed under addition.
This completes the proof.
\end{proof}

\vspace*{-0.25cm}
\begin{remark}
\label{rm:sig}
Theorem~\ref{theorem:g_G_eps1} presents a significant theoretical result, demonstrating that the FourNet can approximate the exact transition density $g(\cdot)$ within an error of any given magnitude in the $L_2$-sense.\footnote{This result is expected since Gaussians with fixed variance are dense in $L^2(\Rbb)$ \cite{calcaterra2008approximating}.}  Interestingly, this error is invariant under the Fourier transform, tying together FourNet's approximation capabilities in both spatial and Fourier domains. This invariance opens opportunities for training and error analysis in the Fourier domain instead of the spatial domain.
In particular, it enables us to utilize the known closed-form expression of the characteristic function $G(\cdot)$, a process we elaborate on in subsequent sections.
\end{remark}

\subsection{Loss function}
\label{sc:FourierNN}
Recall that $\widehat{g}(x; \theta)$ in $\Sigma(\phi)$ has the form
\EQ
\label{eq:approx_nn*}
\widehat{g}(x; \theta) = \sum_{n =1}^{N} \beta_n \phi\l(w_{n} x + b_n\r), \quad \phi(x) = \exp(-x^2), \quad \theta \in \Theta.
\EN
We let $\widehat{G}(\cdot; \theta)$ be the Fourier transform of $\widehat{g}(\cdot; \theta)$, i.e.\ $\widehat{G}(\eta; \theta) = \mathfrak{F}\l[\widehat{g}(\cdot; \theta)\r](\eta)$. {\siscblue{By substitution, we have
\begin{align}
\label{eq:Ghat1}
\widehat{G}(\eta; \theta) &= \int e^{i\eta x} \widehat{g}(x; \theta)~dx
= \sum_{n=1}^N \beta_n \int e^{i\eta x} \phi (w_n x + b_n)~dx
\nonumber
\\
&= \sum_{n=1}^N \beta_n \int_{\Rbb} \cos (\eta x)~\phi (w_n x + b_n)~dx +i \sum_{n=1}^N \beta_n \int_{\Rbb} \sin(\eta x)~\phi (w_n x + b_n)~dx
\nonumber
\\
&=\text{Re}_{\widehat{G}}(\eta; \theta) + i\text{Im}_{\widehat{G}}(\eta; \theta).
\end{align}
Here, by integrating the integral terms with $\phi(x) = \exp(-x^2)$, we obtain }}
\begin{align*}
\text{Re}_{\widehat{G}}(\cdot) \!=\!\!\sum_{n=1}^N\! \frac{\beta_{n}\sqrt{\pi}}{w_{n}} \cos\bigg(\frac{\eta b_n}{w_{n}}\bigg)\! \exp\bigg(\frac{-\eta^2}{4w_{n}^2}\bigg),
~\text{Im}_{\widehat{G}}(\cdot) \!= \!\!\sum_{n=1}^N\! \frac{\beta_{n}\sqrt{\pi}}{w_{n}} \sin\bigg(\frac{-b_n \eta}{w_{n}}\bigg)\! \exp\bigg(\frac{- \eta^2}{4w_{n}^2}\bigg).
\end{align*}
\noindent Recall that our starting point is that $G(\cdot)$, the Fourier transform of the transition density $g(\cdot)$, is known in closed form.
Therefore, motivated by Theorem~\ref{theorem:g_G_eps1}, the key step of our methodology is to use the known data $\{\l(\eta, \text{Re}_G(\eta; \theta)\r)\}$ and
$\{\l(\eta, \text{Im}_G(\eta; \theta)\r)\}$ to train  $\widehat{G}(\eta; \theta)$ using the expressions in \eqref{eq:Ghat1}.

To this end, we restrict the domain of $\eta$ from $\Rbb$ to a fixed finite interval $[-\eta', \eta']$, where $0<\eta'<\infty$ and is sufficiently large. We denote the total number of training data points by $P$, and we consider a deterministic, potentially non-uniform, partition $\{\eta_p\}_{ p = 1}^P$ of the interval $[-\eta', \eta']$.
With  $\delta_p = \eta_{p+1}-\eta_p$, $p = 1, \ldots, P-1$,
we assume
\EQ
\label{eq:eta}
\delta_{\min} = C_0/P, \quad \delta_{\max} = C_1/P, \quad
{\text{with $\delta_{\min} = \min_p \delta_p$ and $\delta_{\max} = \max_p \delta_p$,}}
\EN
where the constants $C_0, C_1>0$ are finite and  independent of $P$. Letting $\widehat{\Theta} \subseteq \Theta$ be the empirical parameter space, we introduce an empirical loss function ${\text{Loss}}_{P}(\theta)$, $\theta \in \widehat{\Theta}$, below
\begin{linenomath}
\postdisplaypenalty=0
\begin{align}
\label{eq:loss}
{\text{Loss}}_P(\theta) &=
\frac{1}{P} \sum_{p=1}^{P} \l|G(\eta_p) - \widehat{G}(\eta_p; \theta)\r|^2 + R_{P}(\theta),~~
{\text{$\{\eta_p\}_{ p = 1}^P$ satisfying \eqref{eq:eta}}}.
\end{align}
\end{linenomath}
Here, $\widehat{G}(\eta_p;\theta)$ is defined in \eqref{eq:Ghat1}, with $R_{P}(\theta)$ as the MAE regularization term
\EQ
\label{eq:reg}
R_{P}(\theta) = \frac{1}{P }\sum_{p=1}^{P}
\big(|{\siscblue{\text{Re}_{G}(\eta_p)}} - \text{Re}_{\widehat{G}}(\eta_p; \theta)|
+  |{\siscblue{\text{Im}_{G}(\eta_p)}} - \text{Im}_{\widehat{G}}(\eta_p; \theta)|\big).
\EN
By training ${\text{Loss}}_P(\cdot)$, we aim to find the empirical minimizer $\widehat{\theta}^* \in \widehat{\Theta}$, where
\EQ
\label{eq:thetastar}
\widehat{\theta}^* = \arg\min_{\theta \in \widehat{\Theta}} {\text{Loss}}_{P}(\theta).
\vspace*{-0.5cm}
\EN
{\siscblue{\begin{remark}[MAE regularization]
\label{rm:mse}
The incorporation of the MAE regularization term in the loss function \eqref{eq:loss} is strategically motivated by its ability to significantly enhance FourNet's robustness and accuracy through two key mechanisms listed below.
\begin{itemize}
\item Control over non-negativity: As detailed in Remark~\ref{rm:non-neg},  the MAE component  enables direct control over the upper bound of the potential pointwise loss of non-negativity in $\widehat{g}(\cdot; \theta)$. This control over non-negativity though training of the loss function not only enhances the mathematical integrity of our density estimates but also underscores FourNet's significant practical advantages over traditional Fourier-based and NN-based density estimation methods.

\item
Enhanced training accuracy in critical regions: The MAE component improves accuracy specifically at critical regions identified through deterministic partition points ${\eta_p}_{p=1}^P$. These points target critical areas, such as those with convexity changes and peaks, in both the real ($\text{Re}_G(\cdot)$) and imaginary ($\text{Im}_G(\cdot)$) components of $G(\cdot)$.
This focus ensures precise local fits,  thereby complementing \(L_2\)-error minimization.
This strategy not only aims for an optimal overall fit across the entire domain of \(G(\cdot)\) but also prioritizes precision at points critical to FourNet's effectiveness. It aligns with our overarching goal of minimizing \(L_2\)-errors, crucial for subsequent \(L_2\)-error analysis.  Further details on the selection of $\{\eta_p\}_{ p = 1}^P$ are discussed in Subsection~\ref{ssc:sampling}
\end{itemize}
\end{remark}}}
We conclude that, for deep NNs, the function \(\widehat{g}(\cdot; \theta)\) is expressed as a composition of functions.
However, computing its Fourier transform can be very complex, as noted by \cite{bergner2006spectral}. Yet, our extensive numerical experiments have demonstrated that a single-layer FFNN possesses remarkable estimation capabilities.

\begin{remark}[Truncation error in the Fourier domain]
\label{rm:trunc}
Note that,  given the boundedness of the parameter space $\Theta$,
both $|\text{Re}_{\widehat{G}}(\cdot; \theta)|$ and $|\text{Im}_{\widehat{G}}(\cdot; \theta)|$, $\theta \in \Theta$,
are in $L_1 (\Rbb)$. We also recall that both $|\text{Re}_{G}(\cdot)|$ and $|\text{Im}_{G}(\cdot)| \in L_1(\Rbb)$.
Furthermore, since both $g(\cdot)$ and $\widehat{g}(\cdot; \theta)\in \Sigma(\phi)$, for any $\theta \in \Theta$,
are in $L_1(\Rbb)\cap L_2(\Rbb)$, by the Plancherel Theorem~\eqref{eq:Plancherel},
both $|G(\cdot)|$ and $|\widehat{G}(\cdot;\theta)|$ are in $L_2(\Rbb)$.
Therefore, for any given $\epsilon>0$,  $\exists \eta'>0$ such that, with $f \in \{\text{Re}_{G}, \text{Im}_{G}, \text{Re}_{\widehat{G}}(\cdot; \theta),
\text{Im}_{\widehat{G}}(\cdot; \theta)\}$ and $h \in \{G(\cdot), \widehat{G}(\cdot; \theta)\}$,
\begin{linenomath}
\postdisplaypenalty=0
\begin{align}
\int_{\Rbb\setminus[-\eta',\eta']} |f(\eta)|~d\eta < \epsilon, \quad
\int_{\Rbb\setminus[-\eta',\eta']} |h(\eta)|^2~d\eta < \epsilon,
\quad \forall \theta \in \Theta.
\label{eq:Rb2}
\end{align}
\end{linenomath}
That is, the truncation error in the Fourier domain can be made arbitrarily small
by choosing $\eta'>0$ sufficiently large. In practice, given
a closed-form expression for $G(\cdot)$, $\eta'$ can be determined numerically,
as illustrated in Subsection~\ref{ssc:sample}.
\end{remark}

\begin{remark}[Gaussian mixtures]
\label{rm:GM}
There are two potential interpretations of our methodology. The first  sees $\widehat{g}(x; \theta)$ in \eqref{eq:approx_nn*}
as an FFNN approximation of the exact transition density $g(\cdot)$, and its
parameters are learned by minimizing the loss function ${\text{Loss}}_{P}(\theta)$ (defined in \eqref{eq:loss}).
Alternatively, $\widehat{g}(x; \theta)$ in \eqref{eq:approx_nn*} can be written as
\EQ
\label{eq:Gaussian_mix}
\widehat{g}(x; \theta) = \sum_{n = 1}^N  \frac{1}{\sqrt{2\pi \sigma_n^2}} \exp\l(- \frac{(x - \mu_n)^2}{2\sigma_n^2}\r),
\quad \mu_n = -\frac{b_n}{w_n},~ \sigma_n^2 = \frac{1}{2w_n^2}.
\EN
This can be essentially viewed as a Gaussian mixture with $N$ components \cite{mclachlan2019finite}, where the $n$-th Gaussian component has mean $\mu_n = -\frac{b_n}{w_n}$ and variance $\frac{1}{2w_n^2}$. Unlike traditional Gaussian mixtures, the centers of the component distributions are not predetermined but are also learned through training.
Finally, it is worth noting that the set of all normal mixture densities is dense in the set of all density functions under the $L_1$-metric (see \cite{li1999mixture}), hence a mixture of Gaussian  like in \eqref{eq:Gaussian_mix} can be used to estimate any unknown density function.
\end{remark}

\section{Error analysis}
\label{sc:error}
We denote by $\widehat{\theta}$ the parameter learned from training the loss function ${\text{Loss}}_{P}(\theta)$,
and refer to $\widehat{g}(\cdot; \widehat{\theta})$ as the corresponding estimated transition density.
We aim to derive an upper bound for the   $L_2$ estimation error $\int_{\Rbb} |g(x)- \widehat{g}(x; \widehat{\theta})|^2~ dx$.
By the Plancherel Theorem~\eqref{eq:Plancherel}, we have
$\int_{\Rbb} |g(x)- \widehat{g}(x; \widehat{\theta})|^2~ dx =
\int_{\Rbb} \big|G(\eta) - \widehat{G}(\eta; \widehat{\theta})\big|^2 d\eta$.
This underscores the unique advantages of the proposed approach: error analysis is better suited to the Fourier domain than to the spatial domain, as we can directly benefit from the loss function ${\text{Loss}}_{P}(\theta)$, which is designed specifically for this domain.

In our error analysis, we require $C' := \sup_{\eta, \theta}|\partial |G(\eta) -  \widehat{G}(\eta; \theta)|^2 /\partial \eta| < \infty$,
for all $\eta\in [-\eta', \eta']$ and \(\theta \in \Theta \).
Given that \(\Theta\) is bounded and thus \(\widehat{G}(\eta; \theta)\) possesses a bounded first derivative, the requirement for \(C' < \infty\) is that \(G(\eta)\) also has a bounded first derivative. This leads us to the assumption that the random variable associated with the density \(g(\cdot)\) is absolutely integrable.
We also recall $C_0$ and  $C_1$ from \eqref{eq:eta}.
We now present an error analysis of the FourNet method in Lemma~\ref{lemma:Gtog} below.
\begin{lemma}
\label{lemma:Gtog}
As per Remark~\ref{rm:trunc}, for a given $\epsilon_1>0$, let the truncated Fourier domain $[-\eta', \eta']$ be such that,  with $f \in \{\text{Re}_{G}(\cdot), \text{Im}_{G}(\cdot), \text{Re}_{\widehat{G}}(\cdot; \theta), \text{Im}_{\widehat{G}}(\cdot; \theta)\}$ and $h~\in~\{G(\cdot), \widehat{G}(\cdot; \theta)\}$,
\begin{linenomath}
\postdisplaypenalty=0
\begin{align}
\int_{\Rbb\setminus[-\eta',\eta']} |f(\eta)|~d\eta < \epsilon_1, \quad
\int_{\Rbb\setminus[-\eta',\eta']} |h(\eta)|^2~d\eta < \epsilon_1,
~ \forall \theta \in \Theta.
\label{eq:Rb}
\end{align}
\end{linenomath}
Suppose that the parameter $\widehat{\theta}$ learned by training the empirical loss function ${\text{Loss}}_{P}(\theta)$, $\theta \in \widehat{\Theta}$, defined in \eqref{eq:loss},
is such that
\EQ
\label{eq:loss_eps}
    \bigg|\frac{1}{P }\sum_{p=1}^{P}\big|G(\eta_p)-\widehat{G}(\eta_p; \widehat{\theta})\big|^2+ R_{P}(\widehat{\theta}) \bigg|<\epsilon_2,
\EN
and the regularization term $R_{P}(\widehat{\theta})<\epsilon_3$, where $\epsilon_2, \epsilon_3 >0$.
Then, we have
\EQ
\label{eq:g_eps}
   \int_{\Rbb} \big|g(x)-\widehat{g}(x;\widehat{\theta})\big|^2 ~dx <
 \frac{1}{2\pi}\left(4\epsilon_1 + C_1( \epsilon_2 +\epsilon_3) +\frac{C'C_1^2}{P}\right).
\EN
\end{lemma}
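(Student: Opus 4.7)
The plan is to work entirely in the Fourier domain. By Plancherel \eqref{eq:Plancherel},
$\int_{\Rbb}|g-\widehat g(\cdot;\widehat\theta)|^2\,dx = \tfrac{1}{2\pi}\int_{\Rbb}|G-\widehat G(\cdot;\widehat\theta)|^2\,d\eta$,
so it suffices to bound the Fourier-side integral by $4\epsilon_1 + C_1(\epsilon_2+\epsilon_3) + C'C_1^2/P$. I would split $\Rbb = [-\eta',\eta']\cup(\Rbb\setminus[-\eta',\eta'])$ and handle the two pieces separately.

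For the tail $\Rbb\setminus[-\eta',\eta']$, I would use $|G-\widehat G|^2 \le 2(|G|^2+|\widehat G|^2)$ together with the second bound in \eqref{eq:Rb} applied to both $h=G$ and $h=\widehat G(\cdot;\widehat\theta)$ to obtain a contribution of at most $4\epsilon_1$.

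For the main interval $[-\eta',\eta']$, I would approximate the integral by the left Riemann sum on the partition $\{\eta_p\}$ and control the quadrature error via the hypothesis $C'=\sup_{\eta,\theta}|\partial |G(\eta)-\widehat G(\eta;\theta)|^2/\partial\eta|<\infty$. On each subinterval the mean value theorem yields $|G(\eta)-\widehat G(\eta;\widehat\theta)|^2 \le |G(\eta_p)-\widehat G(\eta_p;\widehat\theta)|^2 + C'\delta_p$, so integrating and summing gives
\[
\int_{-\eta'}^{\eta'} |G-\widehat G|^2\,d\eta \;\le\; \sum_{p}\delta_p\,|G(\eta_p)-\widehat G(\eta_p;\widehat\theta)|^2 + C'\sum_p \delta_p^2.
\]
Using \eqref{eq:eta}, the quadrature remainder is bounded by $C' P\delta_{\max}^2 = C'C_1^2/P$, and the Riemann sum is bounded by $\delta_{\max}\sum_p|G(\eta_p)-\widehat G(\eta_p;\widehat\theta)|^2 = C_1\cdot \tfrac{1}{P}\sum_p|G-\widehat G|^2$.

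The last ingredient is to turn $\tfrac{1}{P}\sum_p|G(\eta_p)-\widehat G(\eta_p;\widehat\theta)|^2$ into $\epsilon_2+\epsilon_3$. From the definition of $\text{Loss}_P$ in \eqref{eq:loss} we have the identity $\tfrac{1}{P}\sum_p|G-\widehat G|^2 = \text{Loss}_P(\widehat\theta) - R_P(\widehat\theta)$, so the triangle inequality together with \eqref{eq:loss_eps} and $R_P(\widehat\theta)<\epsilon_3$ yields $\tfrac{1}{P}\sum_p|G-\widehat G|^2 \le |\text{Loss}_P(\widehat\theta)| + R_P(\widehat\theta) < \epsilon_2+\epsilon_3$. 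Combining the truncation, quadrature-remainder and Riemann-sum estimates and dividing by $2\pi$ yields \eqref{eq:g_eps}. The main obstacle is the Lipschitz/quadrature step: one must use $|\eta-\eta_p|\le\delta_p$ cleanly on each subinterval and invoke $\delta_{\max}=C_1/P$ consistently so that the factors line up to give exactly $C'C_1^2/P$ and $C_1(\epsilon_2+\epsilon_3)$ rather than artificially tighter constants.
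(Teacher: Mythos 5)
Your proposal is correct and follows essentially the same route as the paper's proof: Plancherel to move to the Fourier domain, a $4\epsilon_1$ tail bound from \eqref{eq:Rb}, a left-endpoint quadrature estimate with remainder $C'P\delta_{\max}^2 = C'C_1^2/P$, the bound $\delta_p \le \delta_{\max} = C_1/P$ to convert the Riemann sum into $C_1 \cdot \frac{1}{P}\sum_p|G(\eta_p)-\widehat{G}(\eta_p;\widehat{\theta})|^2$, and finally \eqref{eq:loss_eps} with $R_P(\widehat{\theta})<\epsilon_3$ to obtain $C_1(\epsilon_2+\epsilon_3)$. The only cosmetic difference is that you spell out the mean-value-theorem step behind the quadrature error, which the paper invokes as a standard composite-rule bound.
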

\begin{proof}[Proof of Lemma~\ref{lemma:Gtog}]
Applying the error bound for the composite left-hand-side quadrature rule on a non-uniform partition gives
%
\begin{align}
\label{eq:Pbound}
\big|\sum_{p=1}^{P} \delta_p \big|G(\eta_p)-\hat{G}(\eta_p; \theta)\big|^2 \!\!-\!\! \int_{[-\eta',\eta']}\!\!\!\! \!\big|G(\eta)-\hat{G}(\eta; \theta)\big|^2~ d\eta\big|
\le  C'P (\delta_{\max})^2 = C/P.
\end{align}
noting $\delta_{\max} = C_1/P$, as in \eqref{eq:eta}, where $C = C'C_1^2$.
{\siscblue{Therefore,
\EQA
\label{eq:GboundP}
\int_{\Rbb} \big|G(\eta)&-&\widehat{G}(\eta; \widehat{\theta})\big|^2~ d\eta \overset{\text{(i)}}{=}
\int_{\Rbb\setminus[-\eta',\eta']}\!\!\!\! \big|G(\eta)-\widehat{G}(\eta; \widehat{\theta})\big|^2~ d\eta
+
\int_{[-\eta',\eta']}\!\! \!\!\!\big|G(\eta)-\widehat{G}(\eta; \widehat{\theta})\big|^2~ d\eta
\nonumber
\\
    && \overset{\text{(ii)}}{<} 4\epsilon_1   + \frac{C_1}{P}\sum_{p=1}^{P}  \l|G(\eta_p)-\widehat{G}(\eta_p; \widehat{\theta})\r|^2 + C/P
    \nonumber
    \\
    &&   \overset{\text{(iii)}}{<}4\epsilon_1+ C_1\big(\epsilon_2 +\epsilon_3\big) + C/P.
\ENA
Here, from (i) to (ii), we respectively bound the first and the second terms in (i)  by $4\epsilon_1$, using \eqref{eq:Rb}, and by $C/P+\sum_{p=1}^{P}\delta_{\max} \l|G(\eta_p)-\widehat{G}(\eta_p; \widehat{\theta})\r|^2$,
using \eqref{eq:Pbound} with $\theta = \widehat{\theta}$, noting that
$\delta_{p}\le \delta_{\max} = C_1/P$ $\forall p$. In (iii), we use \eqref{eq:loss_eps} and $R_{P}(\widehat{\theta})<\epsilon_3$ to bound the second term in (ii) by  $C_1(\epsilon_2 +\epsilon_3).$
Using the Plancherel Theorem \eqref{eq:Plancherel} and \eqref{eq:GboundP} gives
\begin{equation*}
   2\pi  \int_{\Rbb} \big|g(x)-\widehat{g}(x;\widehat{\theta})\big|^2 ~dx
    <    4\epsilon_1 + C_1( \epsilon_2 +\epsilon_3) +\frac{C'C_1^2}{P},
\end{equation*}
noting  $C = C'C_1^2$. Rearrange the above gives \eqref{eq:g_eps}.
This completes the proof.}}
\end{proof}
Lemma~\ref{lemma:Gtog}[Eqn.\ \eqref{eq:g_eps}] decomposes the upper bound for the $L_2$ estimation error into several  error components listed below.
\begin{itemize}
 \item Truncation error (Fourier domain): This arises from truncating the sampling domain from \( \Rbb \) to \( [-\eta', \eta'] \).
 It is bounded by $\epsilon_1$ (see \eqref{eq:Rb}), contributing $4\epsilon_1/(2\pi)$ to the derived error bound in \eqref{eq:g_eps}.

\item {\siscblue{Training error: This error results from the deviation of the learned parameters \(\widehat{\theta}\) from minimizing the empirical loss function, which includes the MSE  error and MAE regularization. Its total contribution to the error bound in \eqref{eq:g_eps} is \(C_1(\epsilon_2 + \epsilon_3)/(2\pi)\).}}

\item {\siscblue{Sampling error: This is caused by the use of a finite set of $P$ data points in training.  This error, captured as numerical integration error, is represented as $\frac{C'C_1^2}{2\pi P}$ in the error bound in \eqref{eq:g_eps}.}}
\end{itemize}
\begin{remark}[Nonnegativity of $\widehat{g}(\cdot;\widehat{\theta})$.]
\label{rm:non-neg}
We now investigate the potential loss of nonnegativity in $\widehat{g}(\cdot;\widehat{\theta})$,
where $\widehat{\theta}$ is learned as per Lemma~\ref{lemma:Gtog}.
To this end,  we use \( |\min(\widehat{g}(x; \widehat{\theta}), 0)| \), for an arbitrary $x \in \Rbb$,
as a measure of this potential pointwise loss.
Following similar steps (i)-(ii) of \eqref{eq:Pbound}, noting $R_{P}(\widehat{\theta})<\epsilon_3$,
we have
\EQAS
\int_{\Rbb} \big(|\text{Re}_{G}(\eta)-\text{Re}_{\widehat{G}}(\eta; \widehat{\theta})|+ |\text{Im}_{G}(\eta)-\text{Im}_{\widehat{G}}(\eta; \widehat{\theta})|\big)~d\eta
< 4\epsilon_1 + C_1\epsilon_3 + C'C_1^2/P.
\ENAS
Hence, $|\min(\widehat{g}(x; \widehat{\theta}), 0)| \le |g(x) - \widehat{g}(x; \widehat{\theta})|=
\frac{1}{2\pi} \big|\int_{\Rbb}  e^{-i\eta x} (G(\eta) - \widehat{G}(\eta; \widehat{\theta})) d \eta \big|= \ldots$
\begin{linenomath}
\postdisplaypenalty=0
\begin{align*}
\ldots\!
\le\!
\frac{1}{2\pi}
\int_{\Rbb} \! |\text{Re}_{G}(\eta)-\text{Re}_{\widehat{G}}(\eta; \widehat{\theta})| \!+\! |\text{Im}_{G}(\eta)-\text{Im}_{\widehat{G}}(\eta; \widehat{\theta})|~d\eta
< \frac{1}{2\pi} \big(4 \epsilon_1+ C_1 \epsilon_3 + \frac{C'C_1^2}{{P}}\big).
\end{align*}
\end{linenomath}
As demonstrated above, the derived upper bound for  \(|\min(\widehat{g}(x; \widehat{\theta}), 0)| \) can be decomposed into several error components: truncation error (\(4 \epsilon_1/(2\pi) \)), the MAE regularization term (\(C_1 \epsilon_3/(2\pi) \)),
and  sampling error (\( C'C_1^2/(2\pi P)\)).
\end{remark}

{\siscb{
\begin{remark}[Multi-dimensional]
\label{rm:multi}
We now extend Lemma~\ref{lemma:Gtog} and Remark~\ref{rm:non-neg} to the general \( d \)-dimensional case, where \( d \geq 1 \).
In this context, the target density is \( g(\boldsymbol{x}) \), where \( \boldsymbol{x} \in \mathbb{R}^d \), and \( \widehat{g}(\boldsymbol{x}; \widehat{\theta}) \), with \( \widehat{\theta} \in \widehat{\Theta} \subseteq \Theta \subset \mathbb{R}^{(d+2)N} \), is the estimated density.

 We denote by $\boldsymbol{\eta} = [\eta^{(1)}, \eta^{(2)}, \ldots, \eta^{(d)}]$
the $d$-dimensional Fourier domain vector, restricted to $[-\eta', \eta']^d$,
where $0< \eta'<\infty$ and is sufficiently large. For ease of exposition, we assume there are \( P^{1/d} \) partition points in each dimension, resulting in \( P \) total data points for training. The same (potentially non-uniform) partition is used across all dimensions, with partition intervals satisfying \( \delta_{\min} = C_0 / P^{1/d} \) and \( \delta_{\max} = C_1 / P^{1/d} \), where \( C_0, C_1 > 0 \) are finite constants independent of \( P \) and \( d \). Additionally, we assume
\[
C' := \sup_{\boldsymbol{\eta}, \theta} \left| \nabla_{\boldsymbol{\eta}} \left( |G(\boldsymbol{\eta}) - \widehat{G}(\boldsymbol{\eta}; \theta)|^2 \right) \right| < \infty,
\]
where \( \nabla_{\boldsymbol{\eta}} (\cdot) \) is the gradient with respect to $\boldsymbol{\eta}$.
In this case, the loss function becomes
\EQ
\label{eq:loss_d}
{\text{Loss}}_P(\theta) =
\frac{1}{P} \sum_{p=0}^{P} \l|G(\boldsymbol{\eta}_p) - \widehat{G}(\boldsymbol{\eta}_p; \theta)\r|^2 + R_{P}(\theta),
\EN
where $R_{P}(\theta) = \frac{1}{P} \sum_{p=1}^{P} \left( \left| \text{Re}_{G}(\boldsymbol{\eta}_p) - \text{Re}_{\widehat{G}}(\boldsymbol{\eta}_p; \theta) \right| + \left| \text{Im}_{G}(\boldsymbol{\eta}_p) - \text{Im}_{\widehat{G}}(\boldsymbol{\eta}_p; \theta) \right| \right)$.

Remark~\ref{rm:trunc} also extends to the general multi-dimensional case:
for a given $\epsilon_1>0$, we can find $[-\eta', \eta']^d$ such that
the error from truncating the sampling domain from $\Rbb^d$ to
$[-\eta', \eta']^d$ is bounded by $\epsilon_1$.
Suppose that the parameter $\widehat{\theta}$ learned by training ${\text{Loss}}_{P}(\theta)$, $\theta \in \widehat{\Theta}$, defined in \eqref{eq:loss_d},
is such that ${\text{Loss}}_P(\widehat{\theta}) <\epsilon_2$,
with the regularization term $R_{P}(\widehat{\theta})<\epsilon_3$, where $\epsilon_2, \epsilon_3 >0$.
Then, we have
\EQ
\label{eq:g_g_d}
 \int_{\mathbb{R}^d} \big|g(\boldsymbol{x}) - \widehat{g}(\boldsymbol{x}; \widehat{\theta})\big|^2 ~d\boldsymbol{x}
< \frac{1}{2\pi}
\big(4\epsilon_1 + C_1(\epsilon_2 + \epsilon_3) + \frac{C' C_1^{d +1}}{2P^{1/d}}\big).
\EN
The potential loss of nonnegativity in $\widehat{g}(\cdot;\widehat{\theta})$,
i.e.\ $|\min(\widehat{g}(\boldsymbol{x}; \widehat{\theta}), 0)|$, is bounded by
\EQ
\label{eq:g_loss_d}
|\min(\widehat{g}(\boldsymbol{x}; \widehat{\theta}), 0)| \le \frac{1}{2\pi} \big(4 \epsilon_1+ C_1 \epsilon_3 + \frac{C' C_1^{d +1}}{2P^{1/d}}\big).
\EN
When $d = 1$, we recover the bounds presented in  Lemma~\ref{lemma:Gtog} and Remark~\ref{rm:non-neg}.
The main distinction between the bounds for the one- and multi-dimensional cases arises from the error introduced by the composite left-hand quadrature rule.
\end{remark}
}}

We emphasize that the explicit quantification of bounds for \(L_2\) estimation error and the potential loss of nonnegativity in the estimated transition density, identified and controlled through truncation, training, and sampling errors, highlights FourNet's significant practical advantages. The rigorous analysis of these practical bounds validates our methodology's robustness and ensures its applicability in real-world settings.
{\siscb{In addition, the derived bounds reflect the impact of the dimensionality \(d\), as seen with the term \( \frac{C_1^{d+1}}{P^{1/d}} \) in \eqref{eq:g_g_d}-\eqref{eq:g_loss_d}.}}


The presented analysis offers an in-depth insight into factors influencing the quality of FourNet's approximation for the transition density \(g(\cdot)\).
Crucially, it also draws attention to the significance of the coefficients preceding each error component.
These coefficients act as markers for the worst-case amplification of individual error components, either in the overall $L_2$ estimation error or in potential loss of nonnegativity. Consequently, they serve as signposts, guiding us towards areas where we should focus our efforts for more efficient training and, consequently, reduced error.

Of all components, \(C'\) attracts special attention. This value is directly related to the oscillatory behavior of \(G(\cdot)\), highlighting challenges in approximation. Therefore, curtailing \(C'\) is an important step toward improving FourNet approximation's quality. In the subsequent section, we discuss a straightforward a linear transformation on the input domain which can effectively temper
the oscillatory nature of \(G(\cdot)\), thereby resulting in significantly improved approximation's quality.

\section{Training}
\label{sc:Training}
We now discuss FourNet's data sampling and training algorithms.
\subsection{Linear transformation}
\label{ssc:data} 
{\siscblue{As a starting point for subsequent discussions, we consider a random variable $X$ and denote the characteristic function of the random variable $X$ by $G_X(\eta) = \mathbb{E}[e^{i\eta X}]$, assumed to be available in closed-form. We analyze $\text{Re}_{G_X}(\eta)$ and $\text{Im}_{G_X}(\eta)$ under two scenarios: the Heston model \cite{heston93} (\(X = \ln(S_T)\)) and the Kou model \cite{kou2002jump} (\(X = \ln(S_T/S_0)\)), where \(S_t\) represents the underlying asset price at time \(t \in [0, T]\). Given that $\text{Im}_{G_X}(\eta)$ exhibits similar behaviors across both models, our detailed analysis will focus on $\text{Re}_{G_X}(\eta)$.

Figures~\ref{fig:lt}-(a) and (b) display $\text{Re}_{G_X}(\eta)$ for these cases, with data from Tables~\ref{tab:hes_parameters} and \ref{tab:kou1_parameters}. The Heston model exhibits rapid oscillations within a small domain (about $[-10, 10]$), whereas the Kou model, with a simple oscillation pattern, presents a very large domain (approximately $[-1000, 1000]$).  Both situations pose significant challenges during neural network training: rapid oscillations can lead to numerous local minima and erratic gradients, while large domains can compromise training efficiency.

\begin{figure}[htb!]
      \centering
      \subfigure[untransformed $\text{Re}_{G_X}(\eta)$ - Heston]{
            \includegraphics[width = 0.4\textwidth] {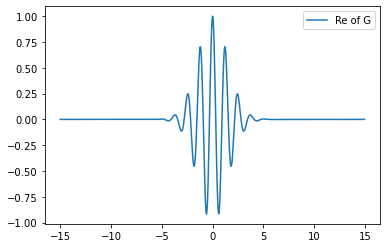}}
  ~
      \subfigure[untransformed $\text{Re}_{G_X}(\eta)$ - Kou]{
            \includegraphics[width = 0.4\textwidth] { 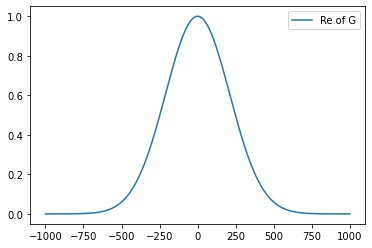}}
        \\
        \subfigure[transformed  $\text{Re}_{G_Y}(\eta)$ - Heston]{
            \includegraphics[width = 0.4\textwidth] {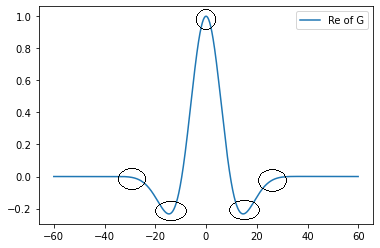}}
        ~
\subfigure[transformed $\text{Re}_{G_Y}(\eta)$ - Kou]{
            \includegraphics[width = 0.4\textwidth] {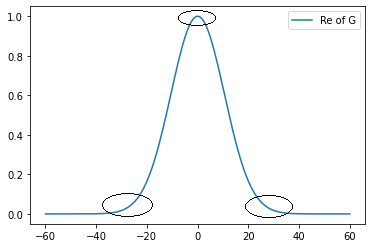}}
      \vspace*{-0.2cm}
      \caption{Comparisons between $G_X(\eta)$ and $G_Y(\eta)$, where $Y = aX + c$; Heston model - (a) and (c): $a=0.15$ and $c= -0.6$; Kou model - (b) and (d): $a=20$ and $c= 0$;
      critical regions of $\text{Re}_{G_Y}$ are highlighted;
      the behaviour of $\text{Im}_{G_X}(\eta)$ is similar (not shown).      }
      \label{fig:lt}
\end{figure}
Motivated by these challenges, we propose a linear transformation \(Y = aX + c\) as a mechanism for adjusting oscillations and controlling domain sizes, resulting in \(G_Y(\eta) = e^{i\eta c}G_X(a\eta)\).
To maintain positive direction and scale in the transformation, $a$ is constrained to be greater than zero.
It modulates oscillation frequency and domain size: \(a < 1\) reduces frequency and expands the domain, while \(a > 1\) compresses it.
The parameter \(c\), a small real number, adjusts the phase of \(G_X(a\eta)\): positive \(c\) shifts the phase forward, negative \(c\) backward. We recommend a \(c\) range from \([-1, 1]\), allowing for significant yet manageable phase shifts across typical \(\eta\) values \cite{oppenheim2017signals}.

Since the overall effectiveness of the transformation heavily depends on the interaction between \(a\) and \(c\) and the properties of \(G_X(\eta)\)--which are highly model-dependent--empirical testing is essential to determine suitable parameter settings.

To illustrate, for the Heston model, \(a = 0.15\) and \(c = -0.6\) reduce oscillation frequency, making $\text{Re}_{G_Y}(\cdot)$ more amenable to NN learning despite a slightly expanded domain $[-60, 60]$ (Figure~\ref{fig:lt}-(c)). In the Kou model, \(a = 20\) (and \(c = 0\)) significantly compresses the domain to $[-50, 50]$, maintaining the same oscillation pattern, which enhances training efficiency (Figure~\ref{fig:lt}-(d)). The critical regions of  $\text{Re}_{G_Y}(\cdot)$ for both models are highlighted, with similar behaviors observed for $\text{Im}_{G_Y}(\cdot)$.

To improve training efficiency further, judicious allocation of sampling data points in crucial areas of both the \(\text{Re}_{G_Y}(\cdot)\) and \(\text{Im}_{G_Y}(\cdot)\) is essential, a strategy to be elaborated in the following subsection.}}
\begin{remark}
\label{rm:linear}
Unless otherwise stated,  throughout our discussion, the characteristic function $G(\cdot)$ employed for the loss function
${\text{Loss}}_{P}(\theta)$ (as defined in \eqref{eq:loss}) corresponds to potentially linearly transformed
characteristic function. Specifically, $G(\cdot) =  G_Y(\cdot)$, where $Y = a X + c$, where $a$ and $c$ are known real constants.
Let $\widehat{g}_Y(y; \widehat{\theta}) = \sum_{n =1}^{N} \widehat{\beta}_n \phi\l(\widehat{w}_{n} y + \widehat{b}_n\r)$, where $\phi = e^{-x^2}$, be {\siscblue{a Fourier-trained FFNN transition density.}} We can recover the estimated transition
density for the random variable $X$ by simply using {\siscblue{$\widehat{g}_X(x; \widehat{\theta}) = \widehat{g}_Y(ax + c; \widehat{\theta}) ~ \left|\frac{d}{dx}(ax + c)\right| = |a|~\widehat{g}_Y(ax + c; \widehat{\theta})$.}}
\end{remark}
\subsection{Sampling data and MAE regularization}
\label{ssc:sampling}
Given our prior knowledge of the (potentially linearly transformed) characteristic function $G(\eta)$ in its closed-form, we strategically concentrate spatial sampling points $\{\eta_p\}_{p = 1}^P$ towards critical regions of $G(\eta)$.
{\siscblue{To identify critical regions, we use symbolic computation to derive the first and second partial derivatives, as well as Hessians for multi-dimensional scenarios, of both the real and imaginary parts of $G(\eta)$.
Critical points and inflection points are determined through these derivatives, with numerical methods applied when closed-form solutions are infeasible.\footnote{For overly complex forms of \(G(\eta)\), we utilize Python to approximate its real/imaginary parts and  visually verify the locations of critical regions.} For visualization, refer to Fig.~\ref{fig:lt} (c) and (d), which highlight critical regions (in circles) for the Heston and Kou models. This methodology allows us to strategically focus our sampling on areas of convexity change, peaks, and other significant features of $\text{Re}_{G}(\eta)$ and $\text{Im}_{G}(\eta)$. Such partitioning of the truncated sampling domain $[-\eta', \eta']$ is achieved through a mapping function, such as the $\texttt{sinh}(\cdot)$-based function, which transforms uniform grids into non-uniform ones with a concentration of points in these critical regions.}} It is noteworthy that similar methodologies for point construction have found successful applications as evidenced in \cite{tavella2000pricing, christara_2010, dang2015efficient}. A partitioning scheme that addresses such scenarios with multiple concentration points is presented in Appendix~\ref{app:non-uni}. We emphasize that a randomly sampled dataset of $\{\eta_p\}_{p = 1}^P$ might inadequately cover these crucial regions, often requiring a significantly larger dataset for the same precision.

With our strategically defined set \(\{\eta_p\}_{p = 1}^P\) in place, we emphasize the role of the MAE regularization \(R_{P}(\theta)\) in our optimization process. It allows the optimization to focus on concentrating efforts to reduce discrepancies specifically in critical regions
while potentially allowing for some discrepancies in less essential areas. Through this, we aim to strike a balance between precision and generalization, thereby curbing potential over-fitting. Our comprehensive numerical tests, presented in Section~\ref{sec:num}, suggest that this combined approach -
strategic sampling based on $\G$ characteristics and employing MAE regularization \eqref{eq:reg} - is efficient and robust.

\subsection{Training considerations}
We briefly describe key considerations in FourNet's training the ${\text{Loss}}_P(\cdot)$ to obtain the empirical minimizer $\widehat{\theta}^*$.
The training of FFNNs is divided into two main stages: the rapid exploration phase and the refinement phase. The initial phase seeks to find a good set of initial weights for the FFNN and fine-tune the baseline learning rate, as these initial weights significantly impact the convergence and accuracy of the training. The refinement phase focuses on further perfecting these weights, often necessitating reduced learning rates to achieve meticulous updates. Due to different focuses of the two stages, choosing the right optimizer for each phase is essential. The Adaptive Moment Estimation (Adam) \cite{adam} and AMSGrad with a Modified Stochastic Gradient \cite{AMSgrad} are standout candidates.

\begin{wrapfigure}{r}{0.475\textwidth}
\begin{center}
\vspace*{-0.2cm}
            \includegraphics[width = 0.45\textwidth] {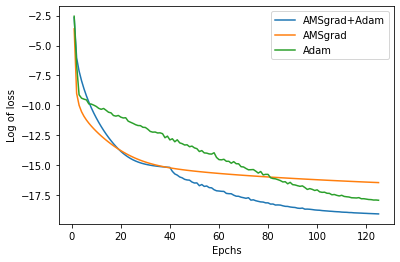}
      \caption{Comparisons among AMSgrad+Adam, Adam, and AMSgrad for the loss function
      ${\text{Loss}}_P(\cdot)$ corresponding to  Figures~\ref{fig:lt} (c) and (d).}
      \label{fig:optimizer}
\end{center}
\vspace*{-0.6cm}
\end{wrapfigure}

Figure~\ref{fig:optimizer} presents a visual comparative analysis of the performance of these optimizers is compared in terms of reducing the empirical loss function ${\text{Loss}}_P(\cdot)$ over a series of training epochs for the case of the Heston model. As illustrated therein, AMSGrad achieves a smoother and steeper reduction in the ${\text{Loss}}_P(\cdot)$ compared to Adam, especially at higher initial learning rates. However, as the epochs progress, Adam tends to surpass AMSGrad. Our proposed methodology suggests employing AMSGrad during the rapid exploration phase and switching to Adam during the refinement phase.

Putting everything together, a single-layer FFNN algorithm for estimating the transition density by learning its Fourier transform is given in Algorithm~\ref{algRong}.
\begin{algorithm}[h]
\caption{Algorithm for approximating the transition density function $g(\cdot)$ using a FFNN trained
in the Fourier domain, given a closed-form expression of the Fourier transform $G(\cdot)$}
\label{algRong}
\begin{algorithmic}[1]
\STATE
using a closed-form expression of $G(\cdot)$ and numerical integration
to find sufficiently large $\eta'$ as per \eqref{eq:Rb2};
\STATE
initialize $N$ (number of neurons), $P$ (number of samples);
generate $\{\eta_p\}_{p = 1}^P$ on $[-\eta', \eta']$ using a non-uniform partitioning algorithm (see Algorithm~\ref{alg:mapping_pro1});
\STATE
use AMSgrad optimizer in first training stage to
find a good set of initial weights and fine-tune the baseline learning rate;

\STATE  use Adam optimizer in the second training stage
\STATE  construct $\widehat{g}(\cdot, \widehat{\theta})$ with $\ds\widehat{\theta} \in \arg\min_{\theta \in \widehat{\Theta}} {\text{Loss}}_{N}(\theta)$,
where ${\text{Loss}}_{N}(\theta)$ is defined in \eqref{eq:loss};
\end{algorithmic}
\end{algorithm}

\noindent The computational complexity of Algorithm~\ref{algRong} is primarily determined by the training of the single-layer FFNN, involving both forward and backward passes.
During the forward pass, the NN computes outputs using operations like matrix-vector multiplications and activation function evaluations, with complexity proportional to $N(d+3)$ (flops), assuming a fixed batch size.
The backward pass, crucial for gradient computation and parameter updates using AMSGrad and Adam optimizers, mirrors this complexity.
Thus, for \(P\) data points and \(H\) epochs in both the forward and backward passes, the total complexity is $\mathcal{O}\big(2PHN(d+3)\big)$ (flops) or $\mathcal{O}(PHNd)$ (flops).

\section{Numerical experiments}
\label{sec:num}
In this section, we demonstrate FourNet's accuracy and versatility through extensive examples.
To measure the accuracy of FourNet, we define several (empirical) metrics. Specifically, the closeness of two elements $f_1$ and $f_2$ of $L_p\left(\mathbb{R}\r)$, $p\in \{1, 2\}$, is measured by $L_p(f_1, f_1) = \int_{[-A, A]}  |(f_1(x)- f_2(x))|^p~ dx$, for $ A>0$ sufficiently large.
In addition, the Maximum Pointwise Error (MPE) is defined by $\text{MPE} (f_1, f_2) = \max_{1\le k\le K} |f_1(x_k)-f_2(x_h)|$,
where  $\{x_k\}_{k = 1}^K$ is the set of evaluation points. Among these, $L_2$-error stands out as the principal metric,
underscored by the $L_2$ error analysis presented in Section~\ref{sc:error}.

In our experiments, unless otherwise stated, all integrals, including those appear in pricing an option, are computed using adaptive Gauss quadrature rule (based on QUADPACK library in Fortran 77 library, \texttt{quad} function in Python).

\subsection{Setup and preliminary observations}
\label{ssc:sample}

Informed by Remarks~\ref{rm:trunc} and \ref{rm:non-neg}, for all numerical experiments carried out in this paper,
the sampling domain $[-\eta', \eta']$ (in the Fourier space), the number of samples $P$ are chosen sufficiently large.
Specifically, in computing a sufficiently large $\eta'$, given a closed-form expression for $G(\cdot)$,
we perform numerical integration to estimate $\eta'$ such that \eqref{eq:Rb2} corresponding to $G(\cdot)$
is satisfied for a tolerance $\epsilon_1 = 10^{-7}$.
That is, with $D = \Rbb\setminus[-\eta',\eta']$, we have
\EQ
\int_{D} |\text{Re}_{G}(\eta)|~d\eta < \epsilon_1,\quad
\int_{D} |\text{Im}_{G}(\eta)|~d\eta < \epsilon_1, \quad
\int_{D} |G(\eta)|^2~d\eta < \epsilon_1.
\label{eq:F}
\EN
This typically results in $[-\eta', \eta'] = [-60, 60]$ for all models considered hereafter.

\begin{center}
\begin{tabular}{|p{0.14\textwidth}|>{\raggedleft\arraybackslash}p{0.1\textwidth}|>{\raggedleft\arraybackslash}p{0.1\textwidth}|}
\hline
 & 1-D& 2-D \\
 & ($d = 1$)& ($d = 2$)\\
\hline
$N$& 45& 45   \\
\hline
$P$& $10^{6}$&  $10^{6}$  \\
\hline
\# $epochs_1$& 5& 6 \\
\hline
\#  $epochs_2$ & 100& 40\\
\hline
$l_1$& 0.0015& 0.04 \\
\hline
$l_2$& 0.0012&0.00025 \\
\hline
$batch size$& 1024& 1024 \\
\hline
time (mins)& 3.2& 22.5 \\
\hline
\end{tabular}
\captionof{table}{{\siscblue{Hyperparameters for NN training  with typical training times.}}}
\label{tab:hyper}
\end{center}

The training setup utilizes a system equipped with an Intel Core i7-13705H Processor,
operating on Windows 11 with 32GB of memory and 1TB of storage. The environment runs Python 3.10 and TensorFlow 2.8. Detailed hyperparameters for all experiments are outlined in Table~\ref{tab:hyper}, which specifies the number of epochs and learning rates for both the rapid exploration phase (\# $epochs_1$ and $l_1$) and the refinement phase (\# $epochs_2$ and $l_2$). All datasets feature one-dimensional inputs ($d = 1$), except for the two-dimensional Merton jump-diffusion process discussed in Section~\ref{ssc:merton2}. The table also presents typical total training times for both phases, aggregated over 30 runs.

The parameters $\widehat{\theta}$ is learned through training, satisfying ${\text{Loss}}_P(\widehat{\theta}) \le 10^{-6}$. This implies that the MAE regularization term
$R_{P}(\theta)$, as defined in \eqref{eq:reg}, is less than $10^{-6}$.
{\myblue{ We observe that the measure for loss of non-negativity $|\min(\widehat{g}(\cdot; \widehat{\theta}), 0)|$ is about  $10^{-7}$, negligible for all practical purposes.
For comparison, we evaluate the bound
$\varepsilon = \frac{1}{2\pi} \big(4 \epsilon_1+ C_1 \epsilon_3 + \frac{C'C_1^2}{{P}}\big)$, as presented in Remark~\ref{rm:non-neg}.
We take $\epsilon_1 = 10^{-7}$ (in accordance with \eqref{eq:F}), and
$\epsilon_3 = 10^{-6}$, given that ${\text{Loss}}_P(\widehat{\theta}) \le 10^{-6}$).
Considering a uniform partition, we have \(\delta_p = 120/P\) for all \(p\). Consequently, \(C_1 \ge 120\). Using a conservative estimate, we take \(C_1 = 120\), yielding \(C_1 \epsilon_3 \approx 10^{-4}\) and \(\frac{C' C_1^2}{P} \approx C'10^{-2}\).
Our numerical findings suggest a notable reduction in the loss of non-negativity when the linear transformation highlighted in Subsection~\ref{ssc:data} is used. This transformation diminishes \(C'\), suggesting that \(\frac{C' C_1^2}{P} \approx C'10^{-2}\) is the primary contributing term.}}

We now explore FourNet's accuracy in estimating transition densities
a broad array of  dynamics commonly encountered in quantitative finance.
Subsequently, we will focus on its application for pricing both European and Bermudan options.

\subsection{Transition densities}
\subsubsection{Exponential L\'{e}vy processes}
We select models that are well-known within the domain of exponential L\'{e}vy processes, where the L\'{e}vy-Khintchine formula provides a clear representation for the characteristic function $G(\cdot)$ as detailed in \cite{ken1999levy}. As example, we focus on the Merton jump-diffusion model, introduced by \cite{merton1976option}, and the CGMY model as proposed by \cite{carr2002fine}. It's worth noting that the CGMY model can be seen as an extension of the Variance-Gamma model, originally presented in \cite{madan1990variance}. Additionally, while we conducted tests on the Variance-Gamma model and the Kou jump-diffusion model \cite{kou2002jump}, FourNet consistently proved to be very accurate. In fact, the outcomes from these tests align so closely with those of the highlighted models that we have chosen not to detail them {\siscblue{in this subsection}}
for the sake of brevity.

In exponential L\'{e}vy processes, with $\{S_t\}_{t = 0}^T$ being the price process, the process $\{X_t\}_{t = 0}^T$,
where $X_t = \ln\l(S_t/S_0\r)$,  is a L\'{e}vy  process. Relevant to our discussions is the fact that the
characteristic function of the random variable $X_t$ is $G_X(\eta) = \exp(t \psi(\eta))$ \cite{ken1999levy}.
As in all numerical examples on transition densities presented in this section, we take $t = T$ which is specified below. The characteristic exponent $\psi(\eta)$ for various exponential L\'{e}vy processes considered in this paper are given subsequently.

\vspace*{+0.1cm}
\noindent\textbf{Merton jump-diffusion dynamics \cite{merton1976option}} In this case, the characteristic exponent $\psi(\eta)$
is given by
$\psi(\eta) = i\left(\mu-\frac{\sigma^2}{2}\right) \eta-\frac{\sigma^2 \eta^2}{2}+\lambda\left(e^{i \tilde{\mu} \eta-\tilde{\sigma}^2 \eta^2 / 2}-1\right)$.
In this case, a semi-explicit formula for $g(x;T)$ is given by (see \cite{Zhang2023}[Corollary 3.1])
\EQ
\label{eq:merton_g}
g(x;T) = \sum_{k=0}^{\infty}  \frac{e^{-\lambda T} (\lambda T)^k}{k!} g_{\norm}\big(x; \big(\mu-\frac{\sigma^2}{2}-\lambda \kappa\big)T + k\tilde{\mu} , \sigma^2 T + k\tilde{\sigma}^2\big).
\vspace*{-0.25cm}
\EN
Here, $\kappa = e^{\tilde{\mu} + \tilde{\sigma}^2/2} - 1$, and $g_{\norm}(x; \mu', (\sigma')^2)$ denotes the probability density function of a normal random variable  with mean $\mu'$ and
variance $(\sigma')^2$. The semi-explicit formula given by \eqref{eq:merton_g} serves as our reference density against which we validate the estimated transition density produced by FourNet. Computationally, we truncate the infinite series in \eqref{eq:merton_g} to 15 terms. The approximation error resulting from this truncation is approximately $10^{-20}$, which is sufficiently small for all practical intents and purposes.
\\
\begin{minipage}[t]{0.475\textwidth}
\strut\vspace*{-\baselineskip}\newline
\flushleft
\begin{tabular}{|p{0.6\textwidth}>{\raggedleft\arraybackslash}p{0.2\textwidth}|}
\hline
Parameters              & Values \\
\hline
$T$ (maturity in years)                           &1\\
$S_0$ (initial asset price)                          &100\\

$r$ (risk free rate)                        & 0.05  \\
$\sigma$ (volatility)                   & 0.15  \\
$\lambda$ (jump intensity)                  & 0.1  \\
$\tilde{\mu}$ (mean of jump size)               & -1.08 \\
$\tilde{\sigma}$ (std of jump size)           & 0.4  \\
\hline
\end{tabular}
\captionof{table}{Parameters for the Merton jump diffusion dynamics;
values are taken from \cite{ForsythLabahn2017}[Table 4].}
\label{tab:MJD_parameters}
\end{minipage}
~~
\begin{minipage}[t]{0.4755\textwidth}
\strut\vspace*{-\baselineskip}\newline
\flushright
\begin{tabular}{|p{0.5\textwidth}>{\raggedleft\arraybackslash}p{0.35\textwidth}|}
\hline
$N$ ($\#$ of neurons)   & 45
\\
\hline
$L_1\l(\text{Re}_{G}, \text{Re}_{\widehat{G}}\r)$       &$2.4\times10^{-04}$
\\
\rowcolor{gray!40}
$L_2\l(\text{Re}_{G}, \text{Re}_{\widehat{G}}\r)$        &$1.4\times10^{-09}$
\\
$\text{MPE}\l(\text{Re}_{G}, \text{Re}_{\widehat{G}}\r)$  & $1.0\times10^{-05}$
\\
\hline
$L_1\l(\text{Im}_{G}, \text{Im}_{\widehat{G}}\r)$ & $5.8\times10^{-04}$
\\
\rowcolor{gray!40}
$L_2\l(\text{Im}_{G}, \text{Im}_{\widehat{G}}\r)$ & $1.8\times10^{-09}$
\\
$\text{MPE}\l(\text{Im}_{G}, \text{Im}_{\widehat{G}}\r)$ & $2.4\times10^{-05}$
\\
\hline
\rowcolor{gray!40}
$L_2\l(g, \widehat{g}\r)$& $1.6\times10^{-09}$
\\
\hline
\end{tabular}
\captionof{table}{Estimation errors for the Merton model;
parameters from  Table~\ref{tab:MJD_parameters};
linear transform in Remark~\ref{rm:linear} used with $(a,c)=(0.6, 0.08)$.
}
\label{tab:MJD_error_g}
\end{minipage}

\noindent  The parameters used for this test case are given in Table~\eqref{tab:MJD_parameters}.
The linear transform in Remark~\ref{rm:linear} is used with $(a,c)=(0.6, 0.08)$.
The number of neurons ($N$) and $L_p$/MPE estimation errors by FourNet are presented in Table~\ref{tab:MJD_error_g},
with the principal metric $L_2$-error highlighted.
As evident, FourNet is very accurate with negligible $L_2$ estimation error (of order $10^{-9}$).
We note that, without a linear transform, the resulting $L_p$/MPE estimation errors are much larger. For example,
$L_2\l(\text{Re}_{G}, \text{Re}_{\widehat{G}}\r)=3.3\times10^{-7}$ instead of $1.4\times10^{-9}$
and $L_2\l(\text{Im}_{G}, \text{Im}_{\widehat{G}}\r) =1.8\times 10^{-7}$ vs $1.8\times 10^{-9}$.

\noindent\textbf{CGMY model \cite{carr2002fine}}
In this case, the characteristic exponent $\psi(\eta)$
is given by $\psi(\eta) = CG\l(\eta \r) + i\eta\l(r+\varpi\r)$, where $CG(\eta)=C \Gamma(-Y)\big[(M-i \eta)^Y-M^Y + (G+i \eta)^Y-G^Y\big]$
and  $\varpi = -CG(-i)$. Here, $\Gamma(\cdot)$ represents the gamma function. In the CGMY model,
the parameter should satisfy $C \ge 0$, $G \ge 0$, $M \ge 0$ and $Y < 2$.

The parameters used for this test case are given in Table~\eqref{tab:CGMY_parameters}.
The linear transform in Remark~\ref{rm:linear} is used with $(a,c)=(0.5, 0.0)$.
The number of neurons ($N$) and $L_p$/MPE estimation errors by FourNet are presented in Table~\ref{tab:CGMY_error_g},
with the principal metric $L_2$-error highlighted.
Again, it is clear that  FourNet is very accurate, with negligible $L_2$ estimation error (of order $10^{-8}$).
\\
\begin{minipage}[t]{0.475\textwidth}
\strut\vspace*{-\baselineskip}\newline
\flushleft
\begin{tabular}{|p{0.7\textwidth}>{\raggedleft\arraybackslash}p{0.15\textwidth}|}
\hline
Parameters              & Values \\
\hline
$T$ (maturity in years)                           &1 \\
$S_0$ (initial asset price)                          &100\\

$r$ (risk free rate)                        & 0.1 \\
$C$ (overall activity)                & 1 \\
$G$ (exp.\ decay on right)             & 5  \\
$M$ (exp.\ decay on left)            & 5 \\
$Y$ (finite/infinite activity)          & 0.5  \\
\hline
\end{tabular}
\captionof{table}{Parameters for the CGMY dynamics;
values taken from \cite{Fang2008}[Equation 56].} \label{tab:CGMY_parameters}
\end{minipage}
~~
\begin{minipage}[t]{0.475\textwidth}
\strut\vspace*{-\baselineskip}\newline
\flushright
\begin{tabular}{|p{0.5\textwidth}>{\raggedleft\arraybackslash}p{0.3\textwidth}|}
\hline
$N$ ($\#$ of neurons)   & 45
\\
\hline
$L_1\l(\text{Re}_{G}, \text{Re}_{\widehat{G}}\r)$                           &$7.8\times10^{-4}$
\\
\rowcolor{gray!40}
$L_2\l(\text{Re}_{G}, \text{Re}_{\widehat{G}}\r)$                           &$1.7\times10^{-8}$
\\
$\text{MPE}\l(\text{Re}_{G}, \text{Re}_{\widehat{G}}\r)$  & $3.1\times10^{-5}$
\\
\hline
$L_1\l(\text{Im}_{G}, \text{Im}_{\widehat{G}}\r)$ & $2.2\times10^{-4}$
\\
\rowcolor{gray!40}
$L_2\l(\text{Im}_{G}, \text{Im}_{\widehat{G}}\r)$ & $1.3\times10^{-9}$
\\
$\text{MPE}\l(\text{Im}_{G}, \text{Im}_{\widehat{G}}\r)$ & $9.2\times10^{-6}$
\\
\hline
\end{tabular}
\captionof{table}{
Estimation errors for the CGMY model;
parameters from  Table~\ref{tab:CGMY_parameters};
linear transform in Remark~\ref{rm:linear} is employed with $(a,c)=(0.5, 0.0)$.}
\label{tab:CGMY_error_g}
\vfill
\vspace*{-1.5cm}
\end{minipage}

\subsubsection{Heston and Heston Queue-Hawkes}
Moving beyond exponential {\siscblue{L\'{e}vy}} processes, we first evaluate the applicability of FourNet to the Heston model \cite{heston93},
followed by an investigation of the Heston Queue-Hawkes model, as presented in \cite{arias2022}.
For these models, the characteristic functions of the log-asset price, $\ln(S_t)$, over $t \in [0, T]$, are available in closed-form. Despite the non-homogeneous variance features, our focus here is on estimating the transition density of the process $\{\ln(S_t)\}_{t \in [0, T]}$ for European option pricing. Since these options do not require time-stepping for valuation, we can efficiently estimate transition densities using a single FFNN training session.

\noindent\textbf{Heston model \cite{heston93}} The log-price $\ln(S_t)$ and its variance $V_t$ follow the dynamics
\begin{align*}
d\ln(S_t) = \big(r-\frac{V_t}{2}\big) dt + \sqrt{V_t}\, dW_t^{(1)}, \quad
dV_t = \kappa (\bar{V} - V_t)\, dt + \sigma\,\sqrt{V_t}\, dW_t^{ (2)},
\end{align*}
with $S_0 >0$ and $V_0 >0$ given. Here, $\kappa, \bar{V}>0$, and $\sigma > 0$ are constants representing the mean-reversion rate,
the long-term mean level of the variance, and the instantaneous volatility of the variance;
$\{W_t^{ (1)}\}$ and $\{W_t^{ (2)}\}$ are assumed to be correlated with correlation
coefficient $\rho \in [-1, 1]$. As presented in \cite{rollin2009new}[Equation~5], the characteristic function of $X_T = \ln(S_T)$  is given by
\begin{align}
\label{eq:hestonG}
 G^{\heston}_X(\eta)&= \exp^{i r\eta T} \exp \left\{i \ln(S_0)\eta\right\}\left(\frac{e^{\kappa T / 2}}{\cosh (d T / 2)+\xi \sinh (d T / 2) / d}\right)^{2 \kappa \bar{V} / \sigma^2}
 \nonumber
 \\
 &\qquad \qquad  \quad  \cdot \exp \left\{-V_0 \frac{\left(i \eta+\eta^2\right) \sinh (d T / 2) / d}{\cosh (d T / 2)+\xi \sinh (d T / 2) / d}\right\},
\end{align}
where $d=d(\eta)=\sqrt{(\kappa-\sigma \rho i \eta)^2+\sigma^2\left(i \eta+\eta^2\right)}$ and $\xi=\xi(\eta)=\kappa-\sigma \rho \eta i$.

Numerical experiments for the Heston model utilize the parameters listed in Table~\ref{tab:hes_parameters} (the Feller condition is met). Estimation errors are documented in Table~\ref{tab:hes_error_g}, noting the linear transformation Remark~\ref{rm:linear}. In Figure~\ref{fig:hesMER_MIM}, we display plots of the benchmark real part in (a) and the imaginary part in (b). Corresponding estimations by FourNet are also showcased, with the transition density estimated by FourNet depicted in (c). We emphasize FourNet's outstanding performance, particularly evident in the minimal $L_2$ estimation error.
\\
\begin{minipage}[t]{0.475\textwidth}
\strut\vspace*{-\baselineskip}\newline
\flushleft
\begin{tabular}{|p{0.7\textwidth}>{\raggedleft\arraybackslash}p{0.15\textwidth}|}
\hline
Parameters              & Values \\
\hline
$T$ (maturity in years)                           &5\\
$S_0$ (initial asset price)                          &100\\

$r$ (risk free rate)                        & 0.15  \\
$\sigma$ (volatility of volatility)                   & 0.3  \\

$\kappa$ (mean-reversion rate)                  & 3  \\
$\bar{V}$ (mean of volatility)               &0.09 \\
$V_0$ (initial volatility )           & 0.2  \\
$\rho$ (correlation)  &0.4\\
\hline
\end{tabular}
\captionof{table}{Parameters for the Heston model;
values taken from \cite{dang2018dimension}[Table 1].} \label{tab:hes_parameters}
\end{minipage}
~~
\begin{minipage}[t]{0.45\textwidth}
\strut\vspace*{-\baselineskip}\newline
\flushright
\begin{tabular}{|p{0.45\textwidth}>{\raggedleft\arraybackslash}p{0.35\textwidth}|}
\hline
$N$ ($\#$ of neurons)  & 45
\\
\hline
$L_1\l(\text{Re}_{G}, \text{Re}_{\widehat{G}}\r)$                           &$1.94\times10^{-05}$
\\
\rowcolor{gray!40}
$L_2\l(\text{Re}_{G}, \text{Re}_{\widehat{G}}\r)$                           &$6.94\times10^{-12}$
\\
$\text{MPE}\l(\text{Re}_{G}, \text{Re}_{\widehat{G}}\r)$  & $1.18\times10^{-06}$
\\
\hline
$L_1\l(\text{Im}_{G}, \text{Im}_{\widehat{G}}\r)$ & $3.91\times10^{-05}$
\\
\rowcolor{gray!40}
$L_2\l(\text{Im}_{G}, \text{Im}_{\widehat{G}}\r)$ & $5.22\times10^{-11}$
\\
$\text{MPE}\l(\text{Im}_{G}, \text{Im}_{\widehat{G}}\r)$ & $3.74\times10^{-06}$
\\
\hline
\end{tabular}
\captionof{table}{Estimation errors for the Heston model;
parameters from  Table~\ref{tab:hes_parameters};
linear transform in Remark~\ref{rm:linear} is employed with $(a,c)=(0.15, -0.6)$.
}
\label{tab:hes_error_g}
\vfill
\end{minipage}

\begin{figure}[htb!]
	\centering
	\subfigure[$\text{Re}_{G}$, $\text{Re}_{\widehat{G}}$]{
		\label{fig:hesre}
		\includegraphics[width = 0.31\textwidth] {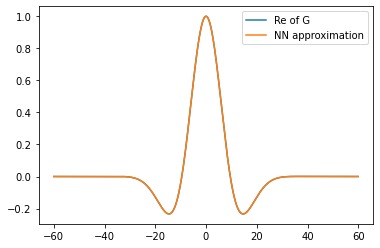}}
    	\subfigure[$\text{Im}_{G}$, $\text{Im}_{\widehat{G}}$]{
		\label{fig:hesim}
		\includegraphics[width = 0.31\textwidth] {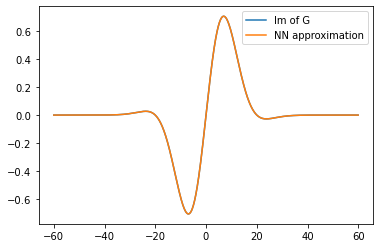}}
        \subfigure[$\widehat{g}(\cdot; T)$]{
		\label{fig:hesdensity}
		\includegraphics[width = 0.31\textwidth] {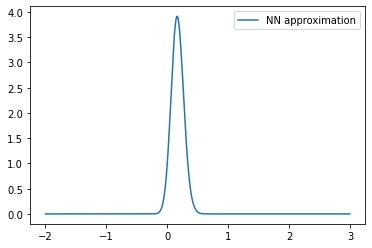}}
	\caption{Heston model corresponding to Table~\ref{tab:hes_parameters} and Table~\ref{tab:hes_error_g}.}
\label{fig:hesMER_MIM}
\end{figure}

\vspace*{+0.25cm}
\noindent\textbf{Heston Queue-Hawkes \cite{daw2022ephemerally, arias2022}}
With $t \in [0, T]$, let $t^{\pm} = \lim_{\epsilon \searrow 0} (t \pm \epsilon)$. Informally, $t^-$ ($t^+$)  denotes the instant of time immediately before (after) calendar time $t$.
The risk-neutral Heston Queue-Hawkes dynamics of the stock price are given by \cite{arias2022}:
\begin{align*}
 &d \l(\frac{S_t}{S_{t^-}}\r) = \l(r-\mu_Y \lambda_{t^-}\r) dt + \sqrt{V_t} dW_t^{(1)} + \l(\exp(Y_t ) - 1\r) d\pi_t, \\
 &dV_t = \kappa (\bar{V}- V_t) dt+ \sigma \sqrt{V_t}  d W_t^{(2)}.
\end{align*}
Here, $\{V_t\}$ is the variance process; $\{W_t^{ (1)}\}$, $\{W_t^{ (2)}\}$ are correlated standard Brownian motions with the constant correlation $\rho \in [-1, 1]$;
$Y_t \sim \text{Normal}(\mu_Y, \sigma_Y^2)$; $\kappa>0$, $\bar{V}>0$ and $\sigma>0$ are the variance's speed of mean reversion, long-term mean, and volatility of volatility parameters, respectively.  Finally, $\{\pi_t\}$ is a counting process with stochastic intensity $\lambda_t$ satisfying the Queue-Hawkes process:
$d \lambda_t = \alpha(d\pi_t - d\pi_t^Q)$, where $\pi_t^Q$ is a counting process with intensity $\beta Q_t$, the constants $\alpha$, and  $\beta$ respectively are the clustering and expiration rates.

The characteristic function of $X_T = \ln(S_T)$ is given by \cite{arias2022}[Equation~6]:
\begin{align}
\label{eq:HQH_den}
G_X^{\HQH}(\eta) = G_X^{\heston}(\eta)~G_{M}(\eta).
\end{align}
Here, $G_X^{\heston}(\eta)$ is given in \eqref{eq:hestonG}, and $G_{M}(\eta)$ is defined as follows
\begin{align*}
G_{M}(\eta)&=  e^{\frac{\lambda^* T}{2 \alpha}\left(\beta-\alpha-i \alpha {\mu}_Y \eta-f(\eta)\right)}
 \cdot\bigg(\frac{2 f(\eta)}{f(\eta)+g(\eta)+e^{-T f(\eta)}(f(\eta)-g(\eta))}\bigg)^{\frac{\lambda^*}{\alpha}} \\
& \cdot\bigg(\frac{\left(1-e^{-T f(\eta)}\right)\left(2 \beta \right)}{f(\eta)+g(\eta)+e^{-T f(\eta)}(f(\eta)-g(\eta))}\bigg)^{Q_0}.
\end{align*}
Here, $f(\eta)=\sqrt{\left(\beta+\alpha\left(1+i \eta {\mu}_Y\right)\right)^2-4 \alpha \beta \psi_Y(\eta)}$, $g(\eta)=\beta+\alpha\left(1+i \eta {\mu}_Y\right)$, $\psi_Y$ is  the characteristic function of normal random variable with mean ${\mu}_Y$ and std ${\sigma}_Y$.

We conduct numerical experiments using the parameters listed in Table~\ref{tab:HQH_parameters}, with estimation errors detailed in Table~\ref{tab:HQH_error_g}, noting the linear transformation Remark~\ref{rm:linear}. Figure~\ref{fig:hesMER_MIM}, we present several plots for the benchmark real/imaginary part in (a)/(b) and  respective results obtained by FourNet, as well as the estimated transition density in (c). Impressively, FourNet demonstrates outstanding $L_2$ estimation accuracy.
\\
\begin{minipage}[t]{0.475\textwidth}
\strut\vspace*{-\baselineskip}\newline
\flushleft
\begin{tabular}{|p{0.7\textwidth}>{\raggedleft\arraybackslash}p{0.15\textwidth}|}
\hline
Parameters              & Values \\
\hline
$T$ (maturity in years)                           &1 \\
$S_0$ (initial asset price)                          &9\\
$V_0$ (initial volatility)                          &0.0625\\
$\bar{V}$ (mean of volatility)  & 0.16\\
$r$ (risk free rate)                        & 0.1  \\
$\sigma$ (volatility of volatility)                   & 0.9  \\
$Q_0$ (initial value of $Q_t$)  &2\\
$\alpha$  (clustering rate) & 2\\
$\beta$ (expiration rate) &3\\
$\lambda^*$ (baseline jump intensity)                  & 1.1  \\
$\mu_Y$ (mean of jump size)               & -0.3 \\
$\sigma_Y$ (std of jump size)           & 0.4  \\

$\rho$ (correlation)  & 0.1\\
\hline
\end{tabular}
\captionof{table}{Parameters for the Heston Queue-Hawkes model.
values are taken from \cite{arias2022}[Table 1].} \label{tab:HQH_parameters}
\end{minipage}
~~~
\begin{minipage}[t]{0.45\textwidth}
\strut\vspace*{-\baselineskip}\newline
\flushright
\begin{tabular}{|p{0.45\textwidth}>{\raggedleft\arraybackslash}p{0.35\textwidth}|}
\hline
$N$ ($\#$ of neurons)  & 45
\\
\hline
$L_1\l(\text{Re}_{G}, \text{Re}_{\widehat{G}}\r)$                           & $3.36\times10^{-4}$
\\
\rowcolor{gray!40}
$L_2\l(\text{Re}_{G}, \text{Re}_{\widehat{G}}\r)$                           &$4.19\times10^{-9}$
\\
$\text{MPE}\l(\text{Re}_{G}, \text{Re}_{\widehat{G}}\r)$  & $2.44\times10^{-5}$
\\
\hline
$L_1\l(\text{Im}_{G}, \text{Im}_{\widehat{G}}\r)$ & $3.38\times10^{-4}$
\\
\rowcolor{gray!40}
$L_2\l(\text{Im}_{G}, \text{Im}_{\widehat{G}}\r)$ & $4.66\times10^{-9}$
\\
$\text{MPE}\l(\text{Im}_{G}, \text{Im}_{\widehat{G}}\r)$ & $1.87\times10^{-5}$
\\
\hline
\end{tabular}
\captionof{table}{
Estimation errors for the Heston Queue-Hawkes model;
parameters from  Table~\ref{tab:HQH_parameters};
linear transform in Remark~\ref{rm:linear} with $(a,c)=(0.18, -0.31)$ is applied to
$G_X^{\HQH}(\cdot)$ in \eqref{eq:HQH_den}.}
\label{tab:HQH_error_g}
\vfill
\end{minipage}
\begin{figure}[htb!]
\vspace*{-0.25cm}
	\centering
	\subfigure[$\text{Re}_{G}$, $\text{Re}_{\widehat{G}}$]{
		\label{fig:HQHre}
		\includegraphics[width = 0.31\textwidth] {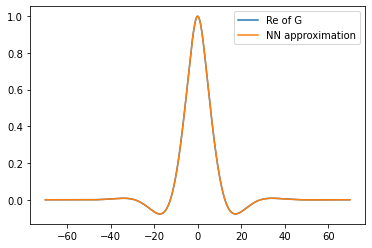}}
    	\subfigure[$\text{Im}_{G}$, $\text{Im}_{\widehat{G}}$]{
		\label{fig:HQHim}
		\includegraphics[width = 0.31\textwidth] {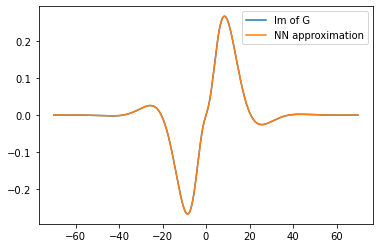}}
        \subfigure[$\widehat{g}(\cdot; T)$]{
		\label{fig:HQHdensity}
		\includegraphics[width = 0.31\textwidth] {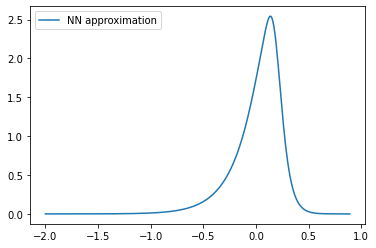}}
	\caption{Heston Queue-Hawkes model corresponding to Table~\ref{tab:HQH_parameters} and Table~\ref{tab:HQH_error_g}.}
\label{fig:HQH_MIM}
\end{figure}

\subsection{Two-dimensional Merton jump-diffusion process}
\label{ssc:merton2}
We now demonstrate the capability of FourNet to two-dimensional Merton jump-diffusion process \cite{ruijter2012two}.
The stock prices follow the risk-neutral dynamics
\EQ
dS_t^{(\ell)} = (r-\lambda\kappa^{(\ell)})S_t^{(\ell)}dt + \sigma^{(\ell)} S_t^{(\ell)} dW_t^{(\ell)} + \big(e^{Y^{(\ell)}} - 1\big)S_t^{(\ell)} d\Pcal_t,
~\ell = 1, 2,
\EN
with $S_0^{(\ell)} >0$ given. Here, $r>0$ is risk free rate and $\sigma^{(\ell)}>0$, $\ell = 1, 2$, are instantaneous volatility for the $\ell$-underlying;
$\{W_t^{(1)}\}$ and $\{W_t^{(2)}\}$ are  standard  Brown motions with correlation $\rho \in [-1, 1]$; $\{\Pcal_t\}$ is a Poisson process with a constant finite jump arrival rate $\lambda\ge0$; $[Y^{(1)}, Y^{(2)}]$ are bivariate normally distributed jump sizes
with mean $\boldsymbol{\tilde{\mu}} = [\tilde{\mu}^{(1)}, \tilde{\mu}^{(2)}]$ and covariance matrix for the jump components,
denoted by $\boldsymbol{\widetilde{\Sigma}}$, where $\boldsymbol{\widetilde{\Sigma}}^{(\ell, k)}=\tilde{\sigma}^{(\ell)} \tilde{\sigma}^{(k)}
\tilde{\rho}^{(\ell, k)}$, $\ell, k \in  \{1, 2\}$, with $\tilde{\rho}^{(1, 2)} = \tilde{\rho}^{(2, 1)} = \tilde{\rho}  \in [-1, 1]$;
$\kappa^{(\ell)} = \mathbb{E}[e^{Y_{(\ell)}}-1]$, $\ell = 1, 2$.

For subsequent use, we define
$\boldsymbol{\mu} = [\mu^{(1)}, \mu^{(2)}]$, where $\mu^{(\ell)} = (r-\lambda\kappa^{(\ell)}-(\sigma^{(\ell)})^2/2) T$, $\ell \in  \{1, 2\}$,
and covariance matrix for the diffusion components,
denoted by $\boldsymbol{\Sigma}$, where $\boldsymbol{\Sigma}^{(\ell, k)}=\sigma^{(\ell)} \sigma^{(k)}
\rho^{(\ell, k)}$, $\ell, k \in  \{1, 2\}$, with $\rho^{(1, 2)} = \rho^{(2, 1)} = \rho  \in [-1, 1]$.
\\
\begin{minipage}[t]{0.475\textwidth}
\strut\vspace*{-\baselineskip}\newline
\flushleft
\begin{tabular}{|p{0.7\textwidth}>{\raggedleft\arraybackslash}p{0.15\textwidth}|}
\hline
Parameters              & Values \\
\hline
$T$ (maturity in years)                           &1\\
$\sigma_1$ (volatility)                          &0.12\\
$\sigma_2$                           &0.15\\
$r$ (risk free rate)                        & 0.05  \\
$K$ (strike price)                        & 100  \\
$\rho$ (correlation)                   & 0.3  \\
$\tilde{\rho}$ (jump correlation)                   & -0.2  \\
$\lambda$ (jump intensity)                  & 0.6  \\
$\tilde{\mu}_1$ (jump size mean)               & -0.1\\
$\tilde{\mu}_2$                & 0.1 \\
$\tilde{\sigma}_1$ (jump size std)           & 0.17 \\
$\tilde{\sigma}_2$           &0.13  \\
\hline
\end{tabular}
\captionof{table}{Parameters for the  2D Merton jump diffusion.
Values are taken from \cite{ruijter2012two} [Parameter sets 2].} \label{tab:MJD2_parameters}
\end{minipage}
\begin{minipage}[t]{0.45\textwidth}
\strut\vspace*{-\baselineskip}\newline
\flushright
\begin{tabular}{|p{0.45\textwidth}>{\raggedleft\arraybackslash}p{0.35\textwidth}|}
\hline
$N$ ($\#$ of neurons)  & 45
\\
\hline

\rowcolor{gray!40}
$L_2\l(\text{Re}_{G}, \text{Re}_{\widehat{G}}\r)$                           &$3.15\times10^{-8}$
\\
$\text{MPE}\l(\text{Re}_{G}, \text{Re}_{\widehat{G}}\r)$  & $8.33\times10^{-5}$
\\
\hline

\rowcolor{gray!40}
$L_2\l(\text{Im}_{G}, \text{Im}_{\widehat{G}}\r)$ & $2.24\times10^{-8}$
\\
$\text{MPE}\l(\text{Im}_{G}, \text{Im}_{\widehat{G}}\r)$ & $4.62\times10^{-5}$
\\
\hline
\end{tabular}
\captionof{table}{Estimation errors for the 2D Merton jump-diffusion model;
parameters from  Table~\ref{tab:MJD2_parameters};}
\label{tab:MJD2_error_g}
\vfill
\end{minipage}

\noindent The characteristic function of the random variable $\boldsymbol{X}_T=\big[\ln\big(S_T^{(\ell)}/S_0^{(\ell)}\big)\big]$, $\ell = 1, 2$, is given by \cite{ruijter2012two}[Eqn~(6.7)]
\EQ
G_{\boldsymbol{X}}(\boldsymbol{\eta}) = \exp\big(i\boldsymbol{\mu}' \boldsymbol{\eta} - \frac{1}{2} \boldsymbol{\eta}' \boldsymbol{\Sigma} \boldsymbol{\eta}\big) \exp\bigg(\lambda T\bigg(\exp\big(i\boldsymbol{\tilde{\mu}}' \boldsymbol{\eta}- \frac{1}{2}\boldsymbol{\eta}'\boldsymbol{\widetilde{\Sigma}}\boldsymbol{\eta}\big)-1\bigg)\bigg).
\label{eq:G_X_2D}
\EN
In this case, it is convenient to write $\widehat{g}(\boldsymbol{x};\theta)$ in the following form:
\EQS
\widehat{g}(\boldsymbol{x};\theta) =
\sum_{n=1}^N \beta_n \frac{1}{(2\pi)|\boldsymbol{\widehat{\Sigma}}_n|^{1/2}} \exp\left(-\frac{1}{2} \l(\boldsymbol{x}-\boldsymbol{\hat{\mu}}_n
\r) \boldsymbol{\hat{\Sigma}}_n^{-1} (\boldsymbol{x}-\boldsymbol{\hat{\mu}}_n)\right).
\ENS
Here, for $n \in N$,  $\boldsymbol{\hat{\mu}}_n = [\hat{\mu}_n^{(1)}, \hat{\mu}_n^{(2)}]$,
$\boldsymbol{\widehat{\Sigma}}_n$ is the covariance matrix, where
$\boldsymbol{\widehat{\Sigma}}^{(\ell, k)}_n = \hat{\sigma}^{(\ell)}_n \hat{\sigma}^{(k)}_n \hat{\rho}^{(\ell, k)}_n$,
$\ell, k \in  \{1, 2\}$, with  $\hat{\rho}^{(1,2)}_{n} = \hat{\rho}^{(2,1)}_{n} = \hat{\rho}_n \in [-1, 1]$.
The parameters to be learned are: $\{\beta_n, \hat{\mu}_n^{(1)}, \hat{\mu}_n^{(2)},  \hat{\rho}_n\}$, $n = 1, \ldots N$.
The real and imaginary parts of the Fourier transform of $\widehat{g}(\boldsymbol{x};\theta)$ are given by
\begin{align*}
    \text{Re}_{\widehat{G}}(\boldsymbol{\eta}) \!= \!\!\sum_{n=1}^N \beta_n \cos(\boldsymbol{\eta}'\boldsymbol{\hat{\mu}}_n)\exp\bigg(\frac{-\boldsymbol{\eta}'\boldsymbol{\widehat{\Sigma}}_n\boldsymbol{\eta}}{2}\bigg),
    ~
    \text{Im}_{\widehat{G}} (\boldsymbol{\eta}) \!= \!\!\sum_{n=1}^N \beta_n \sin(\boldsymbol{\eta}'\boldsymbol{\hat{\mu}}_n)\exp\bigg(\frac{-\boldsymbol{\eta}'\boldsymbol{\widehat{\Sigma}}_n\boldsymbol{\eta}}{2}\bigg).
\end{align*}
We conduct numerical experiments using the parameters listed in Table~\ref{tab:MJD2_parameters}, with estimation errors detailed in Table~\ref{tab:MJD2_error_g}.
As evident from Table~\ref{tab:MJD2_error_g}, FourNet demonstrates impressive $L_2$ estimation accuracy.
{\siscb{Here, \( 10^3 \) partition points per dimension are used, totaling \( 10^6 \) data points for training ($P = 10^6$, as shown in Table~\ref{tab:hyper}).
Comparing the $L_2$ estimation errors for the two-dimensional and one-dimensional cases (Table~\ref{tab:MJD2_error_g} and Table~\ref{tab:MJD_error_g}), and in view of the multi-dimensional error bound~\eqref{eq:g_g_d}, it appears that FourNet is robust, accurate, and reliable even in higher dimensions.}}

\subsection{Option pricing}
We now turn our attention to the application of estimated transition densities produced by FourNet utilized for European and Bermudan option pricing. Recall that  $0\le t < t + \Delta t \le T$, where $t$ and $\Delta t$ are fixed.
Typically, in option pricing, we need to approximate a generic convolution integral of the form
\begin{linenomath}
\postdisplaypenalty=0
\begin{align}
\label{eq:nn_v1}
v(x, t) &=   e^{-r\Delta t} \int_{\Rbb} v((x'-c)/a, t + \Delta t) g(x-x'; \Delta t)~ dx'
\nonumber
\\
&\approx
e^{-r\Delta t} \int_{x_{\min}}^{x_{\max}} v((x'-c)/a, t + \Delta t) \widehat{g}(x-x';\widehat{\theta}, \Delta t)~ dx.
\end{align}
\end{linenomath}
Here, $v(\cdot, t + \Delta t)$ is the time-$(t + \Delta t)$ condition; $\widehat{g}(x;\widehat{\theta}, \Delta t)$ is the estimated transition density obtained through FourNet. As we pointed out in Remark~\ref{rm:linear}, $\widehat{g}(x;\widehat{\theta}, \Delta t)$ reflects a linear transformation applied to the original density. In light of this, the time-$(t + \Delta t)$ terminal condition must be adjusted correspondingly in the convolution integral, as depicted in \eqref{eq:nn_v1}, through $(x'-c)/a$. As noted earlier, this integral is evaluated using  adaptive Gauss quadrature rule (based on QUADPACK library in Fortran 77 library, \texttt{quad} function in Python).
The range $[x_{\min}, x_{\max}]$ for numerical integration will be provided for each test case subsequently.
\\
\begin{minipage}[t]{0.475\textwidth}
\strut\vspace*{-\baselineskip}\newline
\flushleft
\begin{tabular}{|p{0.1\textwidth}>{\raggedleft}p{0.2\textwidth}>{\raggedleft}p{0.2\textwidth}>{\raggedleft\arraybackslash}p{0.2\textwidth}|}
\hline
Strike & Ref.\  \cite{merton1976option}& FourNet   & Rel.\ \\
($E$) &     & \texttt{quad}   & error \\
\hline
96& 14.83787& 14.83790 & $2 \times 10^{-6}$  \\
\hline
98& 13.43922&  13.43925 & $3 \times 10^{-6}$ \\
\hline
100& 12.10782& 12.10785 & $3 \times 10^{-6}$ \\
\hline
102& 10.84925& 10.84928& $3 \times 10^{-6}$ \\
\hline
104& 9.66805& 9.66808& $3 \times 10^{-6}$ \\
\hline
\end{tabular}
\captionof{table}{European call option prices under the Merton model
corresponding to Tables~\ref{tab:MJD_parameters} and \ref{tab:MJD_error_g};
$[x_{\min}, x_{\max}] = [-4, 1]$.}
\label{tab:Merton_error_v}
\vfill
\end{minipage}
~~
\begin{minipage}[t]{0.475\textwidth}
\strut\vspace*{-\baselineskip}\newline
\flushright
\begin{tabular}{|p{0.1\textwidth}>{\raggedleft}p{0.2\textwidth}>{\raggedleft}p{0.2\textwidth}>{\raggedleft\arraybackslash}p{0.2\textwidth}|}
\hline
Strike & Ref.\ \cite{Fang2008}  & FourNet   & Rel.\ \\
($E$) &   (COS)&  \texttt{quad}  & error
\\
\hline
96&  21.78472 & 21.78466 &$3 \times 10^{-6}$  \\
\hline
98& 20.77826  &  20.77819& $3 \times 10^{-6}$ \\
\hline
100& 19.81294 & 19.81288& $3 \times 10^{-6}$ \\
\hline
102& 18.88821  & 18.88815& $3 \times 10^{-6}$ \\
\hline
104& 18.00334 & 18.00328& $3 \times 10^{-6}$ \\
\hline
\end{tabular}
\captionof{table}{
European call option prices under CGMY dynamics
corresponding to data from Tables~\ref{tab:CGMY_parameters} and
\ref{tab:CGMY_error_g}; $[x_{\min}, x_{\max}] = [-4, 2]$.}
\label{tab:CGMY_error_v}
\vfill
\end{minipage}
\\
\begin{minipage}[t]{0.475\textwidth}
\strut\vspace*{-\baselineskip}\newline
\flushleft
\begin{tabular}{|p{0.1\textwidth}>{\raggedleft}p{0.2\textwidth}>{\raggedleft}p{0.2\textwidth}>{\raggedleft\arraybackslash}p{0.2\textwidth}|}
\hline
Strike & Ref.\ \cite{heston93}  & FourNet   & Rel.\ \\
($E$) &     &  \texttt{quad}  & error
\\
\hline
96& 57.35019&57.35014 &$1 \times 10^{-6}$  \\
\hline
98& 56.61132   &  56.61127& $1 \times 10^{-6}$ \\
\hline
100& 55.88119  & 55.88114& $1 \times 10^{-6}$ \\
\hline
102& 55.15980  & 55.15975& $1 \times 10^{-6}$ \\
\hline
104&54.44716  & 54.44711& $1 \times 10^{-6}$\\
\hline
\end{tabular}
\label{tab:hes_error_v}
\captionof{table}{
European call option prices under Heston dynamics
corresponding to data from Tables~\ref{tab:hes_parameters} and
\ref{tab:hes_error_g}; $[x_{\min}, x_{\max}] = [-4, 1]$.
}
\vfill
\end{minipage}
~~
\begin{minipage}[t]{0.475\textwidth}
\strut\vspace*{-\baselineskip}\newline
\flushright
\begin{tabular}{|p{0.1\textwidth}>{\raggedleft}p{0.2\textwidth}>{\raggedleft}p{0.2\textwidth}>{\raggedleft\arraybackslash}p{0.2\textwidth}|}
\hline
Strike & Ref.\ \cite{Fang2008}  & FourNet   & Rel.\ \\
($E$) &   (COS)&  \texttt{quad}  & error
\\
\hline
7& 4.27369&4.27373 &$1 \times 10^{-5}$  \\
\hline
8& 3.81734   &  3.81738& $1 \times 10^{-5}$ \\
\hline
9& 3.40704   & 3.40708& $1 \times 10^{-5}$ \\
\hline
10& 3.04018   & 3.04022& $1 \times 10^{-5}$ \\
\hline
11&2.71399  & 2.71403& $1 \times 10^{-5}$ \\
\hline
\end{tabular}
\label{tab:HQH_error_v}
\captionof{table}{European call option prices under Heston Queue-Hawkes dynamics;
corresponding to data from Tables~\ref{tab:HQH_parameters} and
\ref{tab:HQH_error_g}; $[x_{\min}, x_{\max}] = [-3, 1]$}.
\end{minipage}

\subsubsection{European options}
For European options, we set $t = 0$ and $\Delta t = T$, and $v(x', t + \Delta t) = v(x', T)$ as the payoff function.
The strike of the option is given by $E>0$.
In the context of exponential L\^{e}vy processes examined in this study, which include the Merton and CGMY dynamics, the European call option payoff function is defined as $v\big((x'-c)/a, T\big) \equiv \big(s_0 e^{(x'-c)/a} - E\big)^+$, where $E$ is strike price of the option.
For the Heston and Heston Queue-Hawkes models, the European call option payoff is $v\big((s'-c)/a, T\big) = \big(e^{(s'-c)/a} - E\big)^+$.

We provide numerically computed European option prices for the models discussed in the previous section. These are presented in Tables~\ref{tab:Merton_error_v} (Merton),  \ref{tab:CGMY_error_v} (CGMY), \ref{tab:hes_error_v} (Heston), and \ref{tab:HQH_error_v} (Heston Queue-Hawkes). Option prices are derived using the FourNet-estimated transition density \(\widehat{g}(s;\widehat{\theta}, \Delta t)\), combined with an adaptive Gauss quadrature rule (the \texttt{quad} function in Python) to evaluate the corresponding convolution integral. These results are displayed under the ``FourNet-\texttt{quad}'' column.

Benchmark prices are detailed under the ``Ref.'' column. For the Merton and Heston models, these benchmark prices are determined using the analytical solutions from \cite{merton1976option} and the method by \cite{heston93}, respectively. For CGMY, and Heston Queue-Hawkes models, reference European option prices are derived from our implementation of the Fourier Cosine (COS) method \cite{Fang2008}. The associated relative errors of these approximations are indicated in the ``Rel.\ error'' column. Clearly, the FourNet-\texttt{quad} method proves highly accurate, showcasing a negligible error (on the order of \(10^{-5}\)).

\subsubsection{Bermudan options}
We present a Bermudan put option written on the underlying following the Merton jump-diffusion model \cite{ForsythLabahn2017}.
Unlike European options which can only be exercised at maturity,
a Bermudan put option can be exercised at any fixed dates $t_m^-$, $t_m\in \mathcal{T}$, where
$\mathcal{T}  \equiv \{t_m\}_{m = 1}^M $ is a discrete set of pre-determined early exercise dates. We adopt the
convention that no early exercise at time $t_0$. In this example, the early exercise dates are annually apart, that is, $t_{m+1} - t_m = \delta t = 1$ (year), In addition, the underlying asset pays a fixed dividend amount $D$ at $t_m^-$.
Importantly, given that the transition density needed for pricing is time and spatially homogeneous, and with \(\delta t = 1\) year for all intervals, we can efficiently train the FFNN only once, and apply it across all intervals \([t_m, t_{m+1}]\).

Over each $[t_m, t_{m+1}]$, the pricing algorithm for a Bermudan put option consists of two steps.
In Step 1 (time-advancement), we need to approximate the convolution integral \eqref{eq:nn_v1}:
$v(x, t_m^+) = e^{-r\Delta t} \int_{\Rbb} v((x'-c)/a, t_{m+1}) \widehat{g}(x-x';\widehat{\theta}, \delta t)~ dx$,
for $x \in [x_{\min}, x_{\max}]$, where $ x_{\min} <0< x_{\max}$, $|x_{\min}|$ and $x_{\max}$ are sufficiently large.
In Step 2 (intervention), we impose the condition
\EQ
\label{eq:ber}
v(x, t_m) = \max\big(v\big(\ln(\max(e^x-D, e^{x_{\min}})),t_m^+\big), \max(E-e^x,0)\big).
\EN
Here, $E$ is the strike price, and the expression $\ln(\max(e^x-D, e^{x_{\min}}))$ in \eqref{eq:ber} ensures that the no-arbitrage condition
holds, i.e.\ the dividend paid can not be larger than the stock price at that time, taking into account the localized grid.

Adopting annual early exercise dates, where \(\delta t = 1\) year, the transition density \(\widehat{g}(\cdot; T= 1)\) as obtained from FourNet (as detailed in Table~\ref{tab:MJD_error_g} and based on parameters from Table~\ref{tab:MJD_parameters}) is used in Step~1 (time-advancement) above. 
These parameters and those pertaining to the Bermudan put are given in Table~\ref{tab:berm_parameters}.

Letting $\{x_q\}_{q = 0}^Q$ be a partition of $[x_{\min}, x_{\max}]$, we denote by $v_q^m$ a numerical approximation to the exact value $v(x_q, t_m)$,
where $t_m \in \mathcal{T} \cup\{t_0\}$. Intermediate value $\{v_q^{m+}\}$, $q = 0, \ldots, Q$, is computed by evaluating
the convolution integral in Step~1 via an adaptive Gauss quadrature rule, specifically the \texttt{quad} function in Python.
The time $t_{m+1}$-condition $v(\cdot, t_{m+1})$ is given by a linear combination of discrete solutions $\{v_q^{m+1}\}$.
For condition \eqref{eq:ber}, linear interpolation is then used on $\{v_q^{m+}\}$  to determine the option value $\{v_q^m\}$.
Convergence results for the FourNet-quad approach are displayed in Table~\ref{tab:Merton_error_v_ber}, showcasing evident agreement with an accurate benchmark option price taken from \cite{ForsythLabahn2017}[Table~5] (finest grid).
{\siscblue{Here, to estimate the convergence rate of FourNet-quad, we calculate the ``change'' as the difference in values from coarser to finer partitions (i.e.\ transitioning from smaller to larger values of $Q$) and the ``ratio'' as the quotient of these changes between successive partition.}}
\\
\begin{minipage}[t]{0.45\textwidth}
\strut\vspace*{-\baselineskip}\newline
\flushleft
\begin{tabular}{|p{0.7\textwidth}>{\raggedleft\arraybackslash}p{0.15\textwidth}|}
\hline
Parameters              & Values \\
\hline
$S_0$ (initial asset price)                          &100\\
$r$ (risk free rate)                        & 0.05  \\
$\sigma$ (volatility)                   & 0.15  \\
$\lambda$ (jump intensity)                  & 0.1  \\
$\tilde{\mu}$ (mean of jump size)               & -1.08 \\
$\tilde{\sigma}$ (std of jump size)           & 0.4  \\
\hline
$T$ (maturity in years)                           &10\\
$\delta t$ (frequency in years)                   & 1 \\
$E$ (strike)                          &100\\
$D$ (dividend)                   & 1  \\
\hline
\end{tabular}
\captionof{table}{Parameters for the Bermudan put option;
values taken from \cite{ForsythLabahn2017}[Table~4].}
\label{tab:berm_parameters}
\end{minipage}
~~~
\begin{minipage}[t]{0.45\textwidth}
\strut\vspace*{-\baselineskip}\newline
\flushleft
\vspace*{-0.35cm}
\begin{tabular}{|>{\raggedright}p{0.1\textwidth}>{\raggedleft}p{0.25\textwidth}>{\raggedleft}p{0.25\textwidth}
>{\raggedleft\arraybackslash}p{0.15\textwidth}|}
\hline
$Q$ & FourNet-quad  & \siscblue{change} & ratio\\
\hline
200& 24.8323& &\\
\hline
400& 24.7903& $\siscblue{4.2 \times 10^{-2}}$ &\\
\hline
800& 24.7838 &$\siscblue{6.5\times 10^{-3}}$&6.7\\
\hline
1600& 24.7812&$\siscblue{2.6 \times 10^{-3}}$ &2.5\\
\hline
3200& 24.7806&$\siscblue{6.0\times 10^{-4}}$  & 4.3\\
\hline
\end{tabular}
\captionof{table}{Bermudan put option; parameters from Table~\ref{tab:berm_parameters}; benchmark price:  24.7807, taken from \cite{ForsythLabahn2017}[Table~5, finest grid]; $x_{\min} = \ln(S_0)- 10$, $x_{\max} = \ln(S_0) + 10$.}
\label{tab:Merton_error_v_ber}
\vfill
\end{minipage}

\subsection{Robustness tests}
This section evaluates FourNet's robustness against the COS method \cite{Fang2008} in challenging scenarios, particularly when the transition density approaches a Dirac's delta function as \( T \to 0 \). Fourier-based methods often struggle in such cases, requiring a very large number of terms to accurately estimate the transition density. To further assess robustness, we introduce an asymmetric heavy-tailed distribution. We test with a very short maturity \( T = 0.001 \) years, or about 0.26 business days, using the Kou jump-diffusion model \cite{kou2002jump}, known for effectively modeling the leptokurtic nature of market returns.
\\
\begin{minipage}[t]{0.475\textwidth}
\strut\vspace*{-\baselineskip}\newline
\flushleft
\begin{tabular}{|p{0.7\textwidth}>{\raggedleft\arraybackslash}p{0.15\textwidth}|}
\hline
Parameters              & Values \\

\hline
$T$ (maturity in years)                           &0.001\\
$S_0$ (initial asset price)                           &100\\
$q_1$ (jump-up probability)                          &0.3445\\
$r$ (risk free rate)                        & 0.05  \\
$\sigma$ (volatility)                   & 0.15  \\
$\lambda$ (jump intensity)                  & 0.1  \\
$\xi_1$ (jump-up param.)               & 3.0465 \\
$\xi_2$ (jump-down param.)           & 3.0775 \\
\hline
$E$ (strike price)                          &100\\
\hline
\end{tabular}
\captionof{table}{
Parameters for the Kou jump diffusion dynamics;
values are taken from \cite{ForsythLabahn2017}[Table 1].} \label{tab:kou1_parameters}
\end{minipage}
~~
\begin{minipage}[t]{0.45\textwidth}
\strut\vspace*{-\baselineskip}\newline
\flushright
\begin{tabular}{|p{0.45\textwidth}>{\raggedleft\arraybackslash}p{0.35\textwidth}|}
\hline
$N$ ($\#$ of neurons)  & 45
\\
\hline
$L_1\l(\text{Re}_{G}, \text{Re}_{\widehat{G}}\r)$                           &$1.06\times10^{-03}$
\\
\rowcolor{gray!40}
$L_2\l(\text{Re}_{G}, \text{Re}_{\widehat{G}}\r)$                           &$1.86\times10^{-08}$
\\
$\text{MPE}\l(\text{Re}_{G}, \text{Re}_{\widehat{G}}\r)$  & $7.00\times10^{-05}$
\\
\hline
$L_1\l(\text{Im}_{G}, \text{Im}_{\widehat{G}}\r)$ & $1.10\times10^{-03}$
\\
\rowcolor{gray!40}
$L_2\l(\text{Im}_{G}, \text{Im}_{\widehat{G}}\r)$ & $2.11\times10^{-08}$
\\
$\text{MPE}\l(\text{Im}_{G}, \text{Im}_{\widehat{G}}\r)$ & $3.16\times10^{-05}$
\\
\hline
\end{tabular}
\captionof{table}{
Estimation errors for the Kou model;
parameters from  Table~\ref{tab:kou1_parameters};
linear transform in Remark~\ref{rm:linear} is employed with $(a,c)=(20, 0)$.
}
\label{tab:kou1_error_g}
\vfill
\end{minipage}

\vspace*{+0.25cm}
For this model, the characteristic function of the random variable $X_t = \ln\l(S_t/S_0\r)$ is $G_X(\eta) = \exp(t \psi(\eta))$, where
$\psi(\eta) = i\left(\mu-\frac{\sigma^2}{2}\right) \eta-\frac{\sigma^2 \eta^2}{2} +\lambda\left(\frac{q_1}{1-i \eta \xi_{1}}+\frac{q_2}{1+i \eta \xi_{2}}-1\right)$, with $q_1 \in (0, 1)$ and $q_1 + q_2 = 1$, $\xi_1 > 1$ and $\xi_2 > 0$.

The parameters for this experiment are detailed in Table~\ref{tab:kou1_parameters}, with the  parameters for the linear transformation set to $(a,c)=(20, 0.0)$. FourNet's training results, shown in Table~\ref{tab:kou1_error_g}, reveal an $L_2$-estimation error around $10^{-8}$, demonstrating significant accuracy in this challenging scenario.

In comparison, the COS method, using 800 and 1200 terms (COS-800 and COS-1200) as implemented per \cite{Fang2008}, exhibited significant oscillations and losses of non-negativity in estimated transition densities, especially near \(x = 0\), with these issues being particularly pronounced in the right tail (Figure~\ref{fig:kou_p}). Similar issues were noted in the left tail but are not shown. Figure~\ref{fig:kou_p_call} further illustrates how these oscillations and non-negativity issues in the COS method can lead to highly fluctuating and sometimes negative European call option prices, violating the no-arbitrage principle. In contrast, FourNet displayed minimal non-negativity loss, demonstrating its robustness and precision for financial applications. Notably, compared to European option prices calculated using an analytical formula from \cite{kou2002jump}, FourNet achieved maximum relative errors in option prices of about $10^{-3}$, showcasing superior accuracy.

\begin{figure}[htb!]
\vspace*{-0.5cm}
	\centering
		\subfigure[Estimated transition density]{
		\includegraphics[width = 0.44\textwidth] {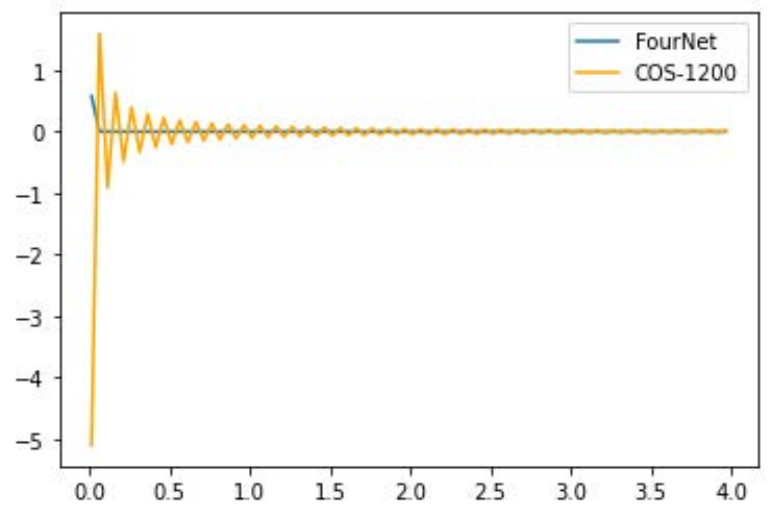}
        \label{fig:kou_p}
        }
        \subfigure[Option prices]{
		\includegraphics[width = 0.46\textwidth] {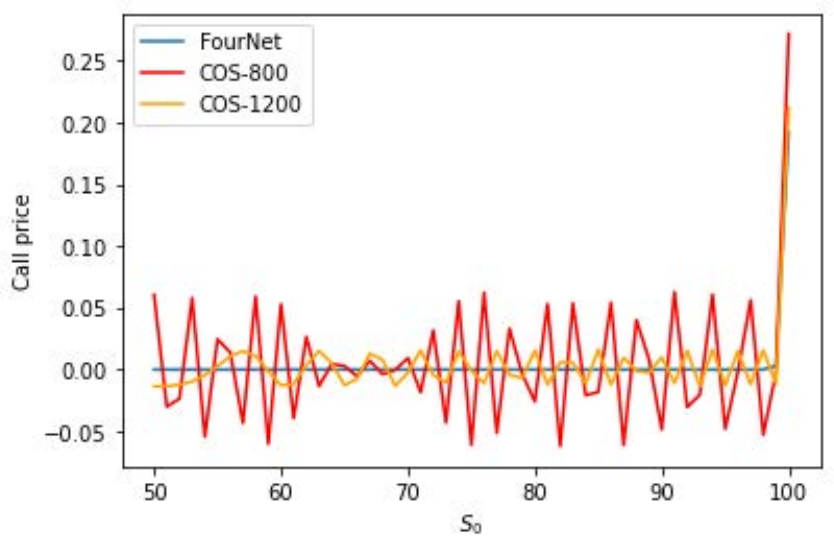}
        \label{fig:kou_p_call}
        }
	\caption{Comparison between FourNet and COS-800/COS-1200,  corresponding to parameters/data from Table~\ref{tab:kou1_parameters} and Table~\ref{tab:kou1_error_g}. }
\label{fig:kou_p_two}
\end{figure}

\section{Conclusion and future work}
\label{sc:cc}
This paper has introduced and rigorously analyzed FourNet, a novel single-layer FFNN developed to approximate transition densities with known closed-form Fourier transforms. Leveraging the unique Gaussian activation function, FourNet not only facilitates exact Fourier and inverse Fourier operations, which is crucial for training,  but also draws parallels with the Gaussian mixture model, demonstrating its power in approximating sufficiently well a vast array of transition density functions. The hybrid loss function, integrating MSE with MAE regularization, coupled with a strategic sampling approach, has significantly enhanced the training process.

Through a comprehensive mathematical analysis, we demonstrate FourNet's capability to approximate transition densities in the $L_2$-sense arbitrarily well with a finite number of neurons. 
We derive practical bounds for the $L_2$ estimation error and the potential (pointwise) loss of nonnegativity in the estimated densities {\siscb{for the general case of $d$-dimensions ($d \ge 1$)}}. We derive practical bounds for the $L_2$ estimation error and the potential (pointwise) loss of nonnegativity in the estimated densities, underscoring the robustness and applicability of our methodology in practical settings. We illustrate FourNet's accuracy and versatility through a broad range of models in quantitative finance, including (multi-dimensional) exponential L\'{e}vy processes and the Heston stochastic volatility models-even those augmented with the self-exciting Queue-Hawkes jump process. European and Bermudan option prices computed using estimated transition densities obtained through FourNet exhibit impressive accuracy.

In future work, we aim to extend FourNet to tackle more complex stochastic control problems, potentially involving higher dimensionality and model non-homogeneity. This expansion is expected to broaden FourNet's applicability and enhance its utility in sophisticated financial modeling. We plan to explore various approaches to improve its performance in high-dimensional settings, assessing a range of enhancements to optimize its architecture and training processes. In addition, FourNet's simplicity and ease of implementation position it well for realistic models previously deemed challenging within existing frameworks. One particular area of interest includes investigating the impact of self-exciting jumps on optimal investment decisions in Defined Contribution superannuation--a topic of heightened relevance in a climate marked by rising inflation and economic volatility.

\small

\section*{Appendices}
\appendix

\section{Constructing non-uniform partitions with multiple peaks}
\label{app:non-uni}
In Algorithm~\ref{alg:mapping_pro1}, we provide a detailed procedure for constructing non-uniform, yet fixed, partitions of the interval $[\eta_l, \eta_u]$, comprised of $M$ sub-intervals. These partitions feature denser points around a specifically chosen point, $\eta_c \in [\eta_l,\eta_u]$. The parameters $d_l$ and $d_u$ determine the point densities in the intervals $[\eta_l, \eta_c]$ and $[\eta_c, \eta_u]$, respectively, represented as $\frac{1}{d_l}$ and $\frac{1}{d_u}$.

\begin{algorithm}[h]{\emph{PartitionOne}}($\eta_l, \eta_u, \eta_c, M, m, d_l, d_u$)
\caption{
\label{alg:mapping_pro1}
Algorithm for constructing a non-uniform partition of an interval $[\eta_l, \eta_u]$ into $M$ sub-intervals,
having a single concentration point, $\eta_c$, which is the $m$-partition point, $m \in\{ 0, \ldots, M\}$, is fixed. }
\begin{algorithmic}[1]
\STATE compute $\ds \alpha_l = \sinh^{-1}\Big{(}\frac{\eta_l-\eta_c}{d_l}\Big{)}$ and
$\ds \alpha_u = \sinh^{-1}\Big{(}\frac{\eta_u-\eta_c}{d_u}\Big{)}$;

\STATE
compute $\eta_0 = \eta_l; \ds \eta_j = \eta_c + d_l\sinh(\alpha_l(1-k_j))$,
where $\ds k_j = \frac{j}{m}$, $~j = 1, \ldots, m$;

\STATE
compute $\ds \eta_j = \eta_c + d_u\sinh(\alpha_u k_j)$,
where $\ds k_j = \frac{j}{M-m}$, $~j = 1, \ldots, (M-m)$;

\STATE
return $\ds Q \equiv \{\eta_j\}_{j = 0}^m  \cup \{\eta_j\}_{j = 1}^{M-m}$;
\end{algorithmic}
\end{algorithm}

\begin{algorithm}[h]{\emph{PartitionMulti}}($\eta_{\min}, \eta_{\max}, \{\eta_j\}_{j=1}^J, \{P_j\}_{j=1}^v, \{q_j\}_{j=1}^J, \{\eta_l^j\}_{j=1}^J,
\{\eta_u^j\}_{j=1}^J $)
\caption{
\label{alg:mapping_pron}
Algorithm for constructing a non-uniform partition of an interval
with multiple concentration points.}
\begin{algorithmic}[1]
\STATE
$\ds Q_1 \leftarrow PartitionOne\Big{(}\eta_{\min}, \frac{\eta_{1} + \eta_{2}}{2}, \eta_1, P_1, q_1, \eta_l^1, \eta_u^1\Big{)}$;

\STATE
$\ds Q_j \leftarrow PartitionOne(\frac{\eta_{j-1} + \eta_{j}}{2}, \frac{\eta_{j} + \eta_{j+1}}{2}, \eta_j, P_j, q_j, \eta_l^j, \eta_u^j)$,
$j = 2, \ldots, J-1$;

\STATE
$\ds Q_J \leftarrow PartitionOne(\frac{\eta_{J-1} + \eta_{J}}{2}, \eta_{\max},
\eta_J, P_J, q_J, \eta_l^J, \eta_u^J)$;

\STATE
return $\ds Q \equiv \cup_{j=1}^J Q_j$;
\end{algorithmic}
\end{algorithm}
We use Algorithm~\ref{alg:mapping_pro1} in Algorithm~\ref{alg:mapping_pron} to generate a non-uniform partition having $P$ sub-intervals for the region $[\eta_{\min}, \eta_{\max}] \equiv [-\eta', \eta']$ with concentration points $\eta_j$, $j = 1, \ldots, J$, satisfying $\ds \eta_{\min} \leq \eta_1< \eta_2 < \ldots < \eta_J \leq \eta_{\max}$.
Here, $P_j$ is the number of sub-intervals for the $j$-th sub-region containing $\eta_j$, $j = 1, \ldots, J$,
with $\sum_{j=1}^J P_j = P$;
$q_j$ is the local
index of the gridpoint in the $j$-th sub-region that is equal
to $\eta_j$; $\eta_l^j$ and $\eta_u^j$ are the upper and lower
density parameters, respectively, associated with the $j$-th sub-region containing $\eta_j$.
\end{document}